\documentclass{article}
\usepackage{amsfonts,amsmath,amsthm}
\usepackage[utf8]{inputenc}

%%%%%%%%%%%%%%%%%%%%%%%%%%%

\let\phi\varphi
\def\H{\mathop{\hbox{\bf H}}}
\newcommand\M{\mathcal M}
\newcommand\N{\mathcal N}
\newcommand\A{\mathcal A}
\newcommand\down{{\downarrow}}
\newcommand\GM{\Gamma_{\!M}}
\newcommand\MMRV{{\sf MMRV}}
\newcommand\rr{\hbox{\sf r}}

\def\H{\mathop{\hbox{\bf H}}}
\newtheorem{theorem}{Theorem}
\newtheorem{lemma}[theorem]{Lemma}
\newtheorem{claim}[theorem]{Claim}
\newtheorem{corollary}[theorem]{Corollary}

\newtheorem*{definition}{Definition}

\newtheorem{conjecture}{Conjecture}
\newtheorem*{problem}{Problem}

%%%%%%%%%%%%%%%%%%%%%%%%%%%
\title{\bf Secret sharing and duality}
\author{Laszlo Csirmaz\thanks{Central European University, Budapest}}
\date{\it\small
Dedicated to the memory of Frantisek Mat\'u\v s}

\begin{document}

\maketitle

\begin{abstract}
Secret sharing is an important building block in cryptography.
All explicit secret sharing schemes which are known to have optimal
complexity are
multi-linear, thus are closely related to linear codes. The dual of such a
linear scheme, in the sense of duality of linear codes, gives another scheme
for the dual access structure. These schemes have the same complexity,
namely the largest share size relative to the secret size is the same. It is
a long-standing open problem whether this fact is true in general: the
complexity of any access structure is the same as the complexity of its
dual. We give a partial answer to this question. An almost perfect scheme
allows negligible errors, both in the recovery and in the independence.
There exists an almost perfect ideal scheme on 174 participants whose 
complexity is strictly smaller than that of its dual. 

\noindent{\bf Key words:} secret sharing; ideal access structure; matroid;
duality; matroid ports, almost entropic polymatroid.

\noindent{\bf AMS subject classification:} 05B35, 94A15, 06D50, 94A62.
\end{abstract}

\section{Introduction}

The complexity of a secret sharing scheme is the largest share size relative
to the secret size. An access structure is ideal if it can be realized by a
scheme of complexity 1. The open question that has been posed in many papers is
if there exists an ideal structure whose dual is not ideal. And, more
generally, if the optimal complexity of an access structure is preserved by
duality. This paper gives a partial answer to these questions by using a
different secret sharing model. This model allows negligible errors, both in
secret recovery and in the independence; ``almost'' refers to this
relaxed model.
The construction of a secret sharing scheme whose almost complexity
differs from its dual's one is a \emph{tour de force} connecting several different
pieces of earlier results.
Theorems \ref{thm:main-ideal} and \ref{thm:main-almost-ideal} state the equivalence
of secret sharing conjectures and matroid representation problems
using the standard and the relaxed models, respectively. The final
construction in Section \ref{sec:final} is based on the second theorem.
Settling the duality conjecture in the standard model is an interesting
research work, a possible direction is indicated in the last section.

We assume familiarity with secret sharing schemes, for an overview 
consult~\cite{beimel-survey}. A significant portion of matroid and polymatroid
theory is used. The standard textbook for matroids is \cite{oxley}, for
polymatroids see \cite{lovasz} and works of F.~Mat\'u\v s
\cite{M.back,fmadhe}. Nevertheless, most of the theorems and claims are
proved here -- a notable exception is F.~Mat\'u\v s result from \cite{fmtwocon}.

Following the usual practice, sets and their subsets are denoted by capital
letters, their elements by lower case letters. The union sign $\cup$ is
frequently omitted as well as the curly brackets around singletons. Thus
$asP$ denotes the set $\{a,s\}\cup P$. The set difference operator has lower
priority than the union, thus $aA{-}bB$ is $(\{a\}\cup A) {-} (\{b\}\cup
B)$.

The paper is organized as follows. Section~\ref{sec:preliminary}
introduces polymatroids, secret sharing, complexity measures, duality, and
concludes with conjectures on the complexity of dual structures.
Section~\ref{sec:ideal-matroid} presents two questions on matroid
representability and proves that they are equivalent to the
conjectures. Section~\ref{sec:final} gives a detailed account of Tarik
Kaced's result on almost entropic matroids \cite{tarik}, completing the
tour. Some open problems are listed in Section \ref{sec:conclusion}.
Two proofs are postponed to the Appendix: the first is
on matroid circuits used in Claim~\ref{claim:connected-to2}, the second is
the the MMRV entropy inequality used in the proof of Theorem~\ref{thm:main}.

\section{Preliminaries}\label{sec:preliminary}

\subsection{Polymatroids}\label{subsec:polymatroids}

A \emph{polymatroid} $\M=(f,M)$ is a non-negative, monotone and submodular
function $f$ defined on the collection of non-empty subsets of the finite
set $M$. Here $M$ is the \emph{ground set}, and $f$ is the \emph{rank
function}. If $f$ takes non-negative integer values only, then $\M$ is
\emph{integer}; an integer polymatroid is a \emph{matroid} if the rank of
singletons are either zero or one. Polymatroids can be identified to
vectors in the $(2^{|M|}-1)$-dimensional Euclidean space where the
coordinates are indexed by subsets of $M$. The collection of
polymatroids with ground set $M$ is a full-dimensional pointed polyhedral
cone denoted by $\GM$, see \cite{yeung-course}.

\smallskip

For a discrete random variable $\xi$ its information content is measured by
the Shannon entropy $\H(\xi)$, see \cite{yeung-course}. Let $\xi=\langle \xi_i:
i\in M\rangle$ be a collection of discrete random variables with some joint
distribution. For a subset $A\subseteq M$, the subcollection $\langle \xi_i:
i\in A\rangle$ is denoted by $\xi_A$. The \emph{conditional entropy} of
random variables $\xi_A$ and $\xi_B$ is $\H(\xi_A|\xi_B) = \H(\xi_{A\cup B})
- \H(\xi_B)$ with value between zero an $\H(\xi_A)$. The value is zero if and
only if $\xi_A$ is determined completely by $\xi_B$, and equals $\H(\xi_A)$
if and only if the random variables $\xi_A$ and $\xi_B$ are independent.

As observed by Fujishige \cite{fuji}, the function $A \mapsto \H(\xi_A)$ is
a rank function of a polymatroid which we denote by $\M_\xi$. The
polymatroid $\M$ is \emph{entropic} if it can be got this way. The
collection of entropic polymatroids on the ground set $M$ is $\GM^*\subseteq
\GM$. For $|M|\ge 3$ the set $\GM^*$ is not closed (in the usual Euclidean
topology). Polymatroids in the closure of $\GM^*$ are called \emph{almost
entropic}, or just \emph{aent}. Aent polymatroids form a full-dimensional
convex cone, and every internal point of this cone is entropic \cite{fmtwocon}. For
$|M|\ge 4$ there is a polymatroid in $\GM$ with a positive distance from the
aent cone \cite{yeung-course}; and the aent cone is not polyhedral
\cite{fminf}.

By an abuse of notation, we say that $\M$ is an \emph{entropic matroid}
if $\M$ is a matroid and for some positive real number $\lambda$
the polymatroid $\lambda\M$ is entropic. 

\smallskip

The singleton $e\in M$ in the polymatroid $(f,M)$ is a \emph{loop} if it has
rank zero. In terms of entropic polymatroid being a loop means that the
variable $\xi_e$ is deterministic: takes a single value with probability 1.
If not mentioned otherwise, polymatroids in this paper have no loops.

With an eye on entropic polymatroids, disjoint subsets $A,B$ of the ground
set $M$ are called \emph{independent} if $f(AB)=f(A)+f(B)$. If $A$ and $B$ are
independent, $A'\subseteq A$, 
$B'\subseteq B$, then  $A'$ and $B'$ are independent as well
 -- this follows from the submodularity of the rank function.
The single subset $A$ is \emph{independent} if any two disjoint subsets
of $A$ are independent. In other words, $A$ is independent iff
$$
    f(A) = {\textstyle\sum}\,\{ f(i): \,i\in A\}.
$$
A \emph{base} is a maximal independent subset which contains no loops; a
\emph{circuit} is a minimal dependent subset. In a loopless polymatroid 
every independent set
can be extended to a base, and every dependent set contains a circuit. In
the case when $\M$ is a matroid every base has the same number of elements,
and this number equals the rank of the ground set $M$. Moreover every subset
$A\subseteq M$ contains an independent set of size $f(A)$, and every subsets
$A\subseteq M$ with rank $f(A)<|A|$ contains a circuit, see \cite{oxley}.

\smallskip

The polymatroid $(f,M)$ is \emph{connected} if for every partition
of $M$ into two non-empty sets $A$ and $B$ we have $f(A)+f(B)>f(M)$, that
is, $A$ and $B$ are \emph{not} independent.
Connected polymatroids have no loops. Indeed, if $i\in M$ is a loop then 
$f(M)=f(M{-}i)$, thus the partition $M=\{i\}\cup (M{-}i)$ contradicts the
connectedness.

\smallskip

For an element $i\in M$, the \emph{private info of $i$} is $f(M)-f(M{-}i)$,
as this is the amount of information which only $i$ and nobody else in $M$
has. If $i$ has no private information, then we say that the polymatroid is
\emph{tight at $i$}. \emph{Tightening at $i$} means that $i$ is stripped off
its private info resulting in the function $f\down i$ defined as
$$
    f\down i : A \mapsto \begin{cases}
          f(A) & \mbox{if $i\notin A$} \\[2pt]
          f(A)-\big(f(M)-f(M{-}i)\big) & \mbox{if $i\in A$}.
    \end{cases}
$$
Of course, $(f\down i,M)$ is a polymatroid tight at $i$. If $\M=(f,M)$ is
tight at every $i\in M$ then $\M$ is \emph{tight}. $\M\down$ is the
polymatroid got from $\M$ after tightening at every element of its ground
set (the result is independent of the order the elements are taken).
Clearly, $\M$ is tight if and only if $\M=\M\down$. If $\M$ is almost
entropic then $\M\down$ is almost entropic; this is a result of F.~Mat\'u\v
s \cite[Lemma~3]{entreg}. In particular, the tight part of an entropic
polymatroid is guaranteed to be almost entropic, but it is not necessarily
entropic. A notable exception is the case of matroids: a matroid $\M$ is
entropic if and only if $\M\down$ is entropic. It is so as if $i$ is not
tight, then $1=f(M)-f(M{-}i) \le f(i)\le 1$ thus $i$ is independent from
all subsets of $M{-}i$, thus the random variable representing $i$ can be 
discarded.

\subsection{Secret sharing}\label{subsec:secret-sharing}

In a perfect secret sharing scheme there is a \emph{secret}, and each participant
from the finite set $P$ receives a \emph{share} such that certain subsets of
participants can recover the secret from their joint shares, while other
subsets -- based on the value of their shares -- should have no
information on the secret. Subsets who can recover the secret are
\emph{qualified}, the qualified subsets form the \emph{access structure
$\A\subseteq 2^P$}. Sets not in $\A$ are called \emph{forbidden} or
\emph{unqualified}. An access structure is clearly upward closed. To avoid
exceptional cases, $\A$ is assumed to be non-empty (thus all participants
together can recover the secret), and the empty set not to be in $\A$ 
(there must be a secret at all).

The participant $i\in P$ is \emph{important} if there is an unqualified
subset such that when $i$ joins this subset, it becomes qualified. If $i$ is
not important, then it can join or leave any subset without affecting its
status. Consequently the share of an unimportant participant does not play
any role, unimportant participants can be discarded. The access
structure $\A$ is \emph{connected} if every participant is important. This
terminology comes from the relationship between access structures and
polymatroids realizing them, see Claims~\ref{claim:connected-to1} and
\ref{claim:connected-to2} below. In the rest of the paper, if not mentioned
otherwise, access structures are assumed to be connected.

\smallskip

There are several definitions of what perfect secret sharing schemes are. The
following definition is considered to be the most general one encompassing
all other natural notions \cite{beimel-survey}. $P$ is the set of
participants and $s\notin P$ denotes the secret.
A \emph{distribution scheme} is a collection of discrete random variables
$\xi=\langle \xi_i:i\in sP\rangle$ with some joint distribution. The value of
$\xi_s$ is the \emph{secret}, while the value of $\xi_i$ is the \emph{share}
of participant $i\in P$. The secret must be non-trivial, namely it must take at
least two different values with positive probability.

The distribution scheme $\xi$ \emph{realizes} an access structure if a) the
collection of shares of a qualified subset determine the secret, and b) the
collection of shares of an unqualified subset is independent of the secret.
Let $\M_\xi=(f,sP)$ be the entropic polymatroid associated with $\xi$.
Shares of the subset $A\subseteq P$ determine the secret iff $\H(\xi_s |
\xi_A)=0$, which translates to $f(sA) = f(A)$. The same collection is
independent of the secret if $\H(\xi_s|\xi_A)=\H(\xi_s)$, which translates
to $f(sA)=f(A)+f(s)$. This justifies the following definition.

\begin{definition}[realizing an access structure]\upshape
The polymatroid $\M=(f,sP)$ \emph{realizes the access structure $\A\subseteq
2^P$} if a) $A\in\A$ if and only if $f(sA)=f(A)$, and b) $A\notin\A$ if and
only if $f(sA)=f(A)+f(s)$. Polymatroids realizing an access structure are
called \emph{secret sharing polymatroids}.
\end{definition}

The entropic polymatroid $\M_\xi$ realizes the access structure $\A$ if
and only if $\xi$ is a distribution scheme realizing $\A$. Indeed, if
$\xi$ is a distribution scheme then $f(s)=\H(\xi_s)$ is positive, thus
one cannot have $f(sA)=f(A)$ and $f(sA)=f(A)+f(s)$ at the same time.
Conversely, if $\M_\xi$ realizes $\A$, then $f(s)>0$ (otherwise
both $f(As)=f(A)$ and $f(As)=f(A)+f(s)$ hold simultaneously),
thus the secret is not trivial. Other conditions follow easily.

\smallskip

The proof of the following well-known fact illustrates the ease of
reasoning when using polymatroids rather than using entropies directly.

\begin{claim}\label{claim:minimum-is-1}
Suppose $\M$ realizes $\A$. 
Then $f(i)\ge f(s)$ for every important participant $i\in P$.
\end{claim}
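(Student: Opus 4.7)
The plan is to exploit the fact that important participants sit on the boundary between qualified and unqualified sets, and then play the two defining equations of the realization definition against monotonicity and submodularity.

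First, I would use the hypothesis that $i$ is important to pick a witness: by definition there is some $B\subseteq P{-}i$ with $B\notin\A$ but $Bi\in\A$. Applying the realization definition to these two sets gives the two key identities
\[
    f(sB) = f(B)+f(s), \qquad f(sBi) = f(Bi).
\]

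Next, I would chain monotonicity and submodularity. Monotonicity yields $f(sB)\le f(sBi)$, which upon substituting the two identities above becomes
\[
    f(B) + f(s) \le f(Bi).
\]
Submodularity in its decreasing-marginals form (equivalently, $f(Bi)+f(\emptyset)\le f(B)+f(i)$ with the standard convention $f(\emptyset)=0$) gives $f(Bi)-f(B)\le f(i)$. Combining the two inequalities yields $f(s)\le f(Bi)-f(B)\le f(i)$, which is the desired bound.

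There is no real obstacle here; the only mild subtlety is that the paper defines polymatroids only on non-empty subsets, so I would be explicit about invoking submodularity in the equivalent ``diminishing returns'' form $f(Bi)-f(B)\le f(i)-f(\emptyset)$ (with $f(\emptyset):=0$), which is the form needed when $B$ and $\{i\}$ are disjoint. Everything else is a direct unpacking of the definitions.
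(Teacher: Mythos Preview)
Your proof is correct and follows essentially the same approach as the paper: pick a witness set for the importance of $i$, invoke the two realization identities for the unqualified and qualified sets, and combine monotonicity with the diminishing-returns form of submodularity. The only cosmetic difference is which side you apply submodularity on; your write-up is in fact cleaner and sidesteps a sign slip in the paper's displayed chain.
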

\begin{proof}
As $i\in P$ is important, there is an unqualified subset
$A\subseteq P$ ($A$ can be empty) such that $iA$ is qualified. Then
$f(sA)=f(A)+f(s)$, and $f(siA)=f(iA)$. Using that $f(i)+f(sA)\ge f(siA)$ and
$f(iA)\ge f(A)$ (submodularity and monotonicity) one gets
\begin{align*}
   f(i)\ge f(siA)-f(sA) = f(iA)-\big(f(A)-f(s)\big) \ge f(s),
\end{align*}
which proves the claim.
\end{proof}

All participants together can always determine the secret, thus
$f(sP)=f(P)$. This means that the secret has no private info. The private
info of the participants does not help at all.

\begin{claim}\label{claim:realize-down}
The polymatroid $\M$ realizes $\A$ if and only if $\M\down$ realizes $\A$.
\end{claim}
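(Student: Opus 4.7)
The plan is to reduce the iff to a single-element tightening step and then iterate. Since $\A$ is nonempty and upward-closed, $P\in\A$, and the paragraph immediately above the claim records the resulting identity $f(sP)=f(P)$; this is the standing assumption under which the claim is stated. Consequently the private info $\alpha_s = f(sP)-f(P)$ of the secret vanishes, tightening at $s$ is the identity map, and $\M\down$ is obtained from $\M$ by tightening only at the participants $i\in P$.

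The core computation is that tightening at any single participant leaves the realization conditions invariant. Fix $i\in P$, set $\alpha_i = f(sP)-f(sP{-}i)$, and take an arbitrary $A\subseteq P$. If $i\in A$, then $i$ lies in both $sA$ and $A$, so $(f\down i)(sA)=f(sA)-\alpha_i$ and $(f\down i)(A)=f(A)-\alpha_i$; if $i\notin A$, neither value is changed. In both cases the difference $(f\down i)(sA)-(f\down i)(A)$ equals $f(sA)-f(A)$, while $(f\down i)(s)=f(s)$ because $s\ne i$. Consequently the equality $f(sA)=f(A)$ is equivalent to $(f\down i)(sA)=(f\down i)(A)$, and likewise $f(sA)=f(A)+f(s)$ to its $f\down i$ counterpart. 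Therefore conditions (a) and (b) in the definition of realization hold for $f$ exactly when they hold for $f\down i$.

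Finally, as the text has already observed, the operation $\down$ is independent of the order in which the ground-set elements are tightened, so $\M\down$ can be obtained from $\M$ by tightening at the participants one at a time. Iterating the single-step equivalence through these $|P|$ steps gives the claim in both directions. The only delicate point I foresee is that tightening at $i$ could in principle alter the private info $\alpha_j$ of a later participant $j$ and so change the meaning of the next step; however a short computation (implicit in the order-independence remark) shows that $\alpha_j$ is unchanged, so the induction goes through cleanly.
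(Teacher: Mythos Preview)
Your proof is correct and follows essentially the same route as the paper's: both reduce to a single tightening step at a participant $i\in P$, observe that for any $A\subseteq P$ the element $i$ lies in $sA$ exactly when it lies in $A$, so the difference $f(sA)-f(A)$ is preserved (and $f(s)$ is untouched), and then iterate over all participants. Your explicit remark that the identity $f(sP)=f(P)$ is a standing hypothesis---so tightening at $s$ is vacuous---is a useful clarification that the paper leaves implicit.
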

\begin{proof}
As observed above, the secret is tight, so let $i\in P$ and $\M\down
i=(f^*,sP)$ be the polymatroid after taking away the private info of $i$.
For every $A\subseteq P$, either $i$ is in both $A$ and $sA$, or $i$ is in
none of them, thus
$$
    f^*(sA)-f^*(A) = f(sA)-f(A).
$$
This means that if one of $\M$ or $\M\down i$ realizes $\A$, then the other 
does the same. the claim follows after tightening at each participant.
\end{proof}

Given an access structure it would be tempting to consider tight
polymatroids only among those which realize it. But, as was mentioned at the
end of Section~\ref{subsec:polymatroids}, there is no guarantee that the
tight part of an entropic polymatroid is also entropic.

\begin{corollary}\label{corr:ideal-is-tight}
Suppose $\A$ is connected, and $\M$ realizes $\A$. If $f(i)=f(s)$ 
then $i\in P$ is tight.
\end{corollary}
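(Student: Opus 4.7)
The plan is to revisit the chain of inequalities used in the proof of Claim \ref{claim:minimum-is-1} and extract equality consequences from the hypothesis $f(i)=f(s)$, then propagate that equality to the statement $f(sP)=f(sP{-}i)$ using submodularity.

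First I would use the connectedness of $\A$. By definition, every participant is important, so in particular there exists an unqualified set $A\subseteq P$ (possibly empty) with $iA\in\A$. By the definition of realization, this gives $f(sA)=f(A)+f(s)$ and $f(siA)=f(iA)$, and of course $i\notin A$.

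Next I would rerun the estimate in Claim \ref{claim:minimum-is-1}: from submodularity $f(i)+f(sA)\ge f(siA)$ and from monotonicity $f(iA)\ge f(A)$, one obtains
\[
 f(i)\ \ge\ f(siA)-f(sA)\ =\ f(iA)-f(A)+f(s)\ \ge\ f(s).
\]
Under the hypothesis $f(i)=f(s)$ both inequalities must be equalities; in particular $f(iA)=f(A)$, i.e.\ the marginal contribution of $i$ over $A$ is zero.

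Finally I would push this marginal all the way up to the ground set by the standard decreasing-marginals consequence of submodularity. For every $B\supseteq A$ with $i\notin B$,
\[
 f(iB)-f(B)\ \le\ f(iA)-f(A)\ =\ 0,
\]
and monotonicity gives equality. Applying this with $B=sP{-}i$ (which contains $A$ since $i\notin A$) yields $f(sP)=f(sP{-}i)$, which is precisely the statement that $i$ is tight. The only thing to watch is that the witness $A$ is genuinely disjoint from $i$ so that $A\subseteq sP{-}i$; this is automatic from the definition of importance, so no real obstacle arises and the corollary is essentially a postscript to the proof of Claim \ref{claim:minimum-is-1}.
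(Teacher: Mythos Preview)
Your overall strategy---squeeze equality out of the chain in Claim~\ref{claim:minimum-is-1} and then push the resulting zero marginal up to the full ground set via submodularity---is sound, but a sign slip breaks the argument as written. With $f(siA)=f(iA)$ and $f(sA)=f(A)+f(s)$ one has
\[
f(siA)-f(sA)=f(iA)-f(A)-f(s),
\]
not $+f(s)$; so your displayed chain, combining $f(i)+f(sA)\ge f(siA)$ with $f(iA)\ge f(A)$, only yields $f(i)\ge -f(s)$. The correct chain uses submodularity for $\{i\}$ and $A$ together with monotonicity on $siA\supseteq sA$:
\[
f(i)\ \ge\ f(iA)-f(A)\ =\ \bigl(f(siA)-f(sA)\bigr)+f(s)\ \ge\ f(s).
\]
Under $f(i)=f(s)$ the equalities you actually obtain are $f(i)=f(iA)-f(A)$ and $f(siA)=f(sA)$; your intermediate conclusion $f(iA)=f(A)$ is false in general. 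For instance, in the uniform matroid $U_{2,4}$ (the $(2,3)$ Shamir scheme) with $A$ a single participant, $f(iA)=2\neq 1=f(A)$.

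The fix is immediate once the sign is right: the vanishing marginal is that of $i$ over $sA$, not over $A$. Since $sA\subseteq sP{-}i$, decreasing marginals give
\[
0\ \le\ f(sP)-f(sP{-}i)\ \le\ f(siA)-f(sA)\ =\ 0,
\]
so $i$ is tight. The paper takes a different route: it tightens at $i$, observes via Claim~\ref{claim:realize-down} that $\M\down i$ still realizes $\A$, applies Claim~\ref{claim:minimum-is-1} to $\M\down i$ to get $f^*(i)\ge f(s)$, and concludes $f^*(i)=f(i)$ from $f^*(i)\le f(i)=f(s)$. Your repaired argument is a legitimate alternative that bypasses Claim~\ref{claim:realize-down} entirely.
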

\begin{proof}
By Claim~\ref{claim:realize-down} $\M\down i=(f^*,sP)$ also realizes $\A$.
As $i\in P$ is important, Claim~\ref{claim:minimum-is-1} gives $f^*(i)\ge f(s)$.
Now $f^*(i)\le f(i)=f(s)$, thus $f^*(i)=f(i)$ showing that $i$ is tight.
\end{proof}

According to Claim~\ref{claim:connected-to1} below, a polymatroid realizing a
connected access structure must be connected. The converse is not true in
general. In the special case when the polymatroid is a matroid the converse
follows from some standard properties of matroid circuits \cite{oxley}.

\begin{claim}\label{claim:connected-to1}
Suppose the polymatroid $\M$ realizes the access structure $\A$. 
If the access structure is connected, then $\M$ is connected.
\end{claim}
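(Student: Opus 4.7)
The plan is to prove the contrapositive: if $\M=(f,sP)$ is disconnected, I will produce a participant in $P$ that is not important, contradicting the hypothesis on $\A$. Suppose $sP = X \cup Y$ is a partition into disjoint non-empty pieces with $f(X)+f(Y)=f(sP)$. Since $s$ lies in exactly one part, I may assume $s\in X$ and write $X=sX'$ with $X'\subseteq P$ (possibly empty), while $Y\subseteq P$ is non-empty.

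The key tool is the hereditary property of independence from the polymatroid preliminaries: any subset of $sX'$ is independent from any subset of $Y$. I pick any $i\in Y$ and any candidate $B\subseteq P-i$, and split $B=B_1\cup B_2$ with $B_1=B\cap X'$ and $B_2=B\cap Y$. Since $s$ and $i$ sit on opposite sides of the partition, each of the four ranks $f(B)$, $f(sB)$, $f(iB)$, $f(siB)$ splits additively as a sum of an $sX'$-side rank and a $Y$-side rank. If $B$ were unqualified with $iB$ qualified, the realization conditions $f(sB)=f(B)+f(s)$ and $f(siB)=f(iB)$ become, after cancelling the common $Y$-side terms, $f(sB_1)=f(B_1)+f(s)$ and $f(sB_1)=f(B_1)$ respectively. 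Subtracting forces $f(s)=0$, which is ruled out because $f(s)=0$ would collapse the qualified and unqualified equations into a single identity, so the polymatroid could not simultaneously realize $\emptyset\notin\A$ and $P\in\A$.

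Therefore no $i\in Y$ is important, contradicting the connectedness of $\A$. The whole argument is essentially bookkeeping around the decomposition $B=B_1\cup B_2$; the only subtlety worth flagging is the degenerate case $X'=\emptyset$, in which the ``$B$ unqualified'' equation becomes trivial and the contradiction has to be extracted solely from the ``$iB$ qualified'' side, but the conclusion $f(s)=0$ still follows, so the argument carries through uniformly.
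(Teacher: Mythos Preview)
Your proof is correct and follows essentially the same route as the paper: both assume a disconnecting partition $sX'\cup Y$, pick an element $i\in Y$, split a witnessing set $B$ across the partition, and use hereditary independence to reduce the unqualified/qualified equations to $f(sB_1)=f(B_1)+f(s)$ and $f(sB_1)=f(B_1)$, yielding the contradiction $f(s)=0$. The only cosmetic difference is notation and your explicit flagging of the $X'=\emptyset$ edge case, which the paper leaves implicit under the convention $f(\emptyset)=0$.
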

\begin{proof}
Assume, by contradiction, that $\M=(h,sP)$ is not connected, which means 
$sA\cup B$, $B\not=\emptyset$ is a partition of the ground set $sP$
and $h(sAB)=h(sA)+h(B)$. In other words, $sA$ and $B$ are independent,
consequently subsets of $sA$ and $B$ are independent as well. Let $b\in B$ and
assume $A'B'$ is not qualified while $A'bB'$ is qualified with $A'\subseteq
A$ and $B'\subseteq B{-}b$. Then $h(sA'B')=h(A'B')+h(s)$, from where
$$
   h(sA')+h(B')=h(sA'B')=h(A'B')+h(s)=h(A')+h(B')+h(s).
$$
On the other hand, $h(sA'bB')=h(A'bB')$, which gives
$$
   h(sA')+h(bB') = h(sA'bB')=h(A'bB')=h(A')+h(bB').
$$
From the first line $h(sA')=h(A')+h(s)$, while from the second
$h(sA')=h(A')$, a contradiction.
\end{proof}
\begin{claim}\label{claim:connected-to2}
Suppose $\M$ is a matroid which realizes the access structure $\A$. If $\M$ is
connected then so is the access structure $\A$.
\end{claim}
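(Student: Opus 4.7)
The plan is to exploit a standard structural fact about connected matroids, namely that any two elements of the ground set of a connected matroid lie in a common circuit (this is exactly the matroid-circuit lemma promised for the Appendix). Given an arbitrary participant $i\in P$, I want to produce an unqualified subset $A\subseteq P$ such that $iA$ is qualified; this will show $i$ is important, and hence $\A$ is connected.

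First I would invoke the circuit lemma to get a circuit $C\subseteq sP$ of the matroid $\M=(f,sP)$ with $s,i\in C$, and set $A:=C{-}sB$, where $B=\{s,i\}$; equivalently $A=C{-}s{-}i\subseteq P$. Because $\M$ is a loopless matroid realizing $\A$, we have $f(s)=1$, and in a matroid a circuit minus any one of its elements is independent. Thus $C{-}i=sA$ is independent, which gives $f(sA)=|A|+1=f(A)+f(s)$, so by the definition of realizability $A$ is unqualified. Similarly $C{-}s=iA$ is independent with $f(iA)=|A|+1$, and since $C=sA\cup i$ is itself a circuit its rank equals $|C|-1=|A|+1$. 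Hence $f(s\cdot iA)=f(iA)$, i.e.\ $iA$ is qualified.

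Combining these two facts, $i$ is important, and since $i$ was arbitrary, every participant is important, so $\A$ is connected.

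The only nontrivial ingredient is the matroid-circuit lemma (connected matroid $\Rightarrow$ any two elements share a circuit); this is the step that genuinely uses $\M$ being a matroid rather than just a polymatroid, and it is exactly the point where the converse of Claim~\ref{claim:connected-to1} can fail in the general polymatroid case. Once that lemma is in hand the rest is just reading off the independent-set/circuit ranks and matching them to the two defining equalities $f(sA)=f(A)+f(s)$ and $f(sA)=f(A)$ of the realization definition.
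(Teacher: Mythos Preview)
Your proposal is correct and follows essentially the same approach as the paper: invoke the matroid-circuit characterization of connectedness to obtain a circuit $C$ through $s$ and the given participant, then read off from the circuit's rank equalities that one subset of $C{-}s$ is qualified while removing the participant makes it unqualified. The only cosmetic difference is that the paper names the qualified set $A=C{-}s$ (so $a\in A$) and shows $A{-}a$ is unqualified, whereas you name the unqualified set $A=C{-}s{-}i$ and show $iA$ is qualified; these are the same argument with shifted labels.
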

\begin{proof}
Using matroid terminology, $A$ is \emph{dependent} if $h(A)<|A|$, and $A$ is a
\emph{circuit} if it is a minimal dependent set. If $C$ is a circuit, then $h(C)=|C|-1$
and for each $i\in C$, $h(C{-}i)=|C|-1$. 
A circuit \emph{connects} two points if it contains both of them.

Let the ground set of the matroid be $sP$ and pick some $a\in P$. To show
that $a$ is important it is enough to find a circuit $C$ connecting $a$ and
$s$. Indeed, let $A=C{-}s$, then $a\in A$ and $h(sA)=h(C)=h(C{-}s)=h(A)$,
thus $A$ is qualified. $C$ is minimal dependent, thus $sA{-}a=C{-}a$ is
independent, and then $h(sA{-}a)=h(s)+h(A{-}a)$ which means $A{-}a$ is not
qualified.

To finish the proof it suffices to quote the following result from matroid
theory \cite[Proposition 4.1.4]{oxley}: \emph{a matroid is connected if and only
if any two points can be connected by a circuit}. For a quick proof see the
Appendix.
\end{proof}

\subsection{Complexity}\label{subsec:complexity}

Distribution schemes realizing an access structure scale up: taking $n$
independent copies of the scheme all entropies are multiplied by $n$ and the
composite scheme still realizes the same access structure. Similarly,
whether a polymatroid realizes an access structure or not is invariant for
multiplying the polymatroid by any positive constant. When defining the
efficiency one has to take into account this scalability. The usual way is
to measure everything in multiples of the secret size. For example, if
$\M=(f,sP)$ is a secret sharing polymatroid, then the \emph{relative share
size} of participant $i\in P$ is $f(i)/f(s)$, and the (worst case)
\emph{complexity of $\M$} is
$$
  \sigma(\M) = \max \bigg\{ \frac{f(i)}{f(s)}\,:\, i\in P \bigg\},
$$
Other complexity measures, not considered here, include average relative
size, and the scaled total randomness.
If $\M$ realizes the connected access structure $\A$, then $\sigma(\M)\ge 1$ 
by Claim~\ref{claim:minimum-is-1}. Access
structures where this lower bound is attained are called \emph{ideal}.

\begin{definition}[ideal and almost ideal structures]\upshape
The access structure $\A$ is \emph{ideal} if it can be realized by an entropic
polymatroid with complexity 1.
The access structure $\A$ is \emph{almost ideal} if it can be realized by an
almost entropic polymatroid with complexity 1.
\end{definition}

In general, the usual definition of the complexity of an access 
structure is the infimum of the complexity of all secret sharing schemes
realizing it:
$$
    \sigma(\A) = \inf\,\big\{ \sigma(\M)\,:\,\M
          \mbox{ is entropic and realizes } \A \big\}.
$$
Interestingly, there is a non-ideal (according to our definition) access 
structure with complexity 1, see \cite[Section 6]{beimel-livne}, thus the
infimum here is not necessarily taken.
The cone of almost entropic polymatroids is closed, see \cite{fmtwocon}
or \cite{yeung-course}, thus an access
structure with complexity 1 is almost ideal. It is an interesting open
question whether the converse is true. When approximating an aent
polymatroid by an entropic one, the only guarantee is that the rank
functions differ by a small (negligible) amount. This means that qualified
subsets can recover the secret with ``overwhelming probability'' only (as
$\H(s|A)$ is not necessarily zero, only negligible),
and unqualified subsets might get information on the secret (as $\H(s|A)$
can be strictly smaller than $\H(s)$), but this information is negligible.
This relaxation is investigated under the name \emph{probabilistic secret
sharing} see, e.g., \cite{prob-secret} and \cite{tarik-2}.
The question is can we patch these imperfections by adding a small amount 
of entropy to the secret? For secret recovery the
answer is \emph{yes}, see \cite{tarik-2}; for independence the author tends
to believe that the answer is \emph{no}.

\smallskip

Next to $\sigma(\A)$ other complexity measures can be defined by considering
other polymatroid classes. Realizing $\A$ by an
entropic polymatroid is the same as realizing it by a distribution scheme.
Realizing by an almost entropic polymatroids instead means that one relaxes
the strict
requirements of recoverability and independence ``up to a negligible
amount''. Linearly representable polymatroids are important from both
practical and theoretical point of view. Such polymatroids arise from linear
error correcting codes \cite{error-correcting-codes}, they are studied
extensively and typically provide concise, efficient and low complexity schemes.
We consider the following polymatroid classes, listed in decreasing order:
\begin{itemize}\setlength\itemsep{0pt plus 2pt}
\item[a)] all polymatroids,
\item[b)] almost entropic polymatroids,
\item[c)] entropic polymatroids,
\item[d)] (conic hull of) linearly representable polymatroids.
\end{itemize}
Every access structure can be realized by a linearly representable
polymatroid, thus every class gives a complexity notion on access structures. 
For classes
a), c) and d) they are denoted by $\kappa$, $\sigma$, and $\lambda$ 
\cite{padro-notes}. For class b) we use $\bar\sigma$ to indicate that we are
considering the closure of entropic polymatroids. The earlier definition of
$\sigma(\A)$ is the same as given here.
\begin{align*}
  \kappa(\A) &= \inf\{\, \sigma(\M): \M \mbox{ realizes } \A \},\\
  \bar\sigma(\A) &= \inf\{\, \sigma(\M): \M \mbox{ is aent and realizes } \A \},\\
  \sigma(\A) &= \inf\{\, \sigma(\M): \M \mbox{ is entropic and realizes } \A
\}, \\
  \lambda(\A) &= \inf\{\, \sigma(\M): \M \mbox{ is linear and realizes } \A \}.
\end{align*}
For the same access structure these values increase (as less and less polymatroids are
considered). Each pair of these measures is known to be separated except for $\sigma$ and
$\bar\sigma$, see \cite{beimel-survey,padro-notes}.

\subsection{Duals}

Let $P$ be the set of participants, and $\A\subseteq 2^P$ be an access
structure. The qualified subsets in the \emph{dual access structure
$\A^\bot$} are the complements of unqualified subsets of $\A$:
$$
    \A^\bot = \{ A\subseteq P\,:\, P{-}A\notin \A \}.
$$
Clearly, the dual of  $\A^\bot$ is $\A$; $\emptyset\notin\A^\bot$ and
$P\in\A^\bot$ as these are true for $\A$. 
\begin{claim}\label{claim:access-dual-connected}
$\A$ is connected if and only if $\A^\bot$ is connected.
\end{claim}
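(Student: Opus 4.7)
The plan is to prove the stronger pointwise statement: a participant $i\in P$ is important in $\A$ if and only if the same $i$ is important in $\A^\bot$. Since connectedness of an access structure is the property that \emph{every} participant is important, this pointwise equivalence immediately yields the claim. No polymatroid or matroid machinery is needed; the proof is a direct unfolding of the definitions, using only the rule $B\in\A^\bot \iff P{-}B\notin\A$.

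Concretely, by the definition of importance, $i$ is important in $\A$ exactly when there is some $A\subseteq P{-}i$ with
\[
  A\notin\A \quad\text{and}\quad iA\in\A.
\]
I would rewrite each membership condition via complements. For a witness $B\subseteq P{-}i$ of importance of $i$ in $\A^\bot$, the conditions $B\notin\A^\bot$ and $iB\in\A^\bot$ translate, by definition of the dual, to $P{-}B\in\A$ and $P{-}iB\notin\A$. Setting $A := P{-}iB\subseteq P{-}i$, one has $iA = P{-}B$, so the latter pair of conditions becomes exactly $A\notin\A$ and $iA\in\A$. The map $B\mapsto P{-}iB$ is an involution on subsets of $P{-}i$, so it gives a bijection between witnesses of importance of $i$ in $\A^\bot$ and witnesses of importance of $i$ in $\A$.

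I expect no real obstacle here: the only thing to be a little careful about is bookkeeping the complement inside $P$ versus inside $P{-}i$, and checking the degenerate case $A=\emptyset$ (which corresponds to $B=P{-}i$) works identically. Once the bijection is exhibited, the biconditional and hence the claim follow immediately.
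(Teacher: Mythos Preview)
Your proposal is correct and follows essentially the same route as the paper: both arguments translate a witness $A$ of importance of $i$ in $\A$ (namely $A\notin\A$, $iA\in\A$) into the witness $B=P{-}iA$ of importance of $i$ in $\A^\bot$ via complementation. The only cosmetic difference is that the paper proves one direction and then invokes $(\A^\bot)^\bot=\A$ for the converse, whereas you observe directly that $B\mapsto P{-}iB$ is an involution on subsets of $P{-}i$; these amount to the same thing.
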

\begin{proof}
Suppose $\A$ is connected, we show that $\A^\bot$ is connected. The other
direction follows from $(\A^\bot)^\bot=\A$. 
Let $a\in P$, and $A\subset P$ unqualified 
in $\A$ such that $aA\in\A$. Such an $A$ exists as $\A$ is connected.
Then $a\in P{-}A$, $P{-}A$ is qualified in $\A^\bot$ and
$(P{-}A){-}a$ is not qualified in $\A^\bot$, as required.
\end{proof}

Let $\M=(f,M)$ be a polymatroid.
Define the discrete measure $\mu$ on subsets of $M$ by
$\mu(i)=f(i)$. As the measure is additive, for every
subset $A\subseteq M$ we have
$$
    \mu(A) = {\textstyle\sum}\,\{ f(i): i\in A \}.
$$
The \emph{dual of the polymatroid $\M$} is $\M^\bot=(f^\bot,M)$
where the function $f^\bot$ is defined for subsets of $M$ as
$$
   f^\bot: A \mapsto f(M{-}A)+\mu(A)-f(M).
$$
By submodularity, $f^\bot$ is non-negative; submodularity holds by an easy
inspection, thus $\M^\bot$ is a polymatroid. If $\M$ is integer-valued then so
is $\M^\bot$; moreover if $\M$ is a matroid (the rank of a singleton is zero
or one), then so is the dual.

\begin{claim}\label{claim:poly-dual-connected}
{\upshape a)} $\M$ is connected if and only if $\M^\bot$ is connected.
{\upshape b)} $\M$ realizes the access structure $\A$ if and only if $\M^\bot$
realizes $\A^\bot$.
\end{claim}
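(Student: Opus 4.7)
The plan is that both parts reduce to direct algebraic manipulation using the definition $f^\bot(A) = f(M{-}A) + \mu(A) - f(M)$ together with the involutivity of duality. The two computations share the same flavour: take the defining expression of $f^\bot$ on a set and its complement, expand, and observe that the linear $\mu$-terms and the $-f(M)$ constants combine cleanly.

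For part (a), I would fix a partition $M = A \cup B$ into two non-empty sets. Using $M{-}A = B$ and $M{-}B = A$, expanding $f^\bot(A) + f^\bot(B)$ produces $f(A) + f(B) + \mu(M) - 2f(M)$, whereas $f^\bot(M) = \mu(M) - f(M)$. Subtracting,
$$
    f^\bot(A) + f^\bot(B) - f^\bot(M) = f(A) + f(B) - f(M),
$$
so the connectedness inequality holds for $\M^\bot$ on the partition $(A,B)$ iff it holds for $\M$. Since $(\M^\bot)^\bot = \M$, either direction of the iff follows.

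For part (b), fix $A \subseteq P$ and set $B = P{-}A$. Noting $sP{-}sA = B$ and $sP{-}A = sB$, the defining formula gives
$$
    f^\bot(sA) - f^\bot(A) = \big(f(B) + \mu(sA)\big) - \big(f(sB) + \mu(A)\big) = f(s) - \big(f(sB) - f(B)\big).
$$
A separate computation shows $f^\bot(s) = f(P) + f(s) - f(sP) = f(s)$, using that $P$ is qualified so that $f(sP) = f(P)$ by the realization of $\A$. Now split by cases: if $B \in \A$ then $f(sB) = f(B)$ and the displayed difference equals $f(s) = f^\bot(s)$, so $A$ is unqualified by $\M^\bot$, matching $A \notin \A^\bot$; if $B \notin \A$ then $f(sB) = f(B) + f(s)$, the difference is $0$, so $A$ is qualified by $\M^\bot$, matching $A \in \A^\bot$. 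This establishes that $\M^\bot$ realizes $\A^\bot$, and the converse follows from $(\M^\bot)^\bot = \M$ and $(\A^\bot)^\bot = \A$. There is no real obstacle; the only subtlety worth highlighting is the identity $f^\bot(s) = f(s)$, which is what ensures that the recovery/independence threshold in the dual polymatroid coincides with the one in $\M$.
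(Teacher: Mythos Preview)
Your forward-direction computations in both parts are correct and coincide with the paper's argument; your explicit check that $f^\bot(s)=f(P)+f(s)-f(sP)=f(s)$ (using $P\in\A$, hence $f(sP)=f(P)$) is a welcome clarification that the paper leaves implicit when it asserts ``thus $f^\bot$ realizes $\A^\bot$''. For part~(a), note too that the identity $f^\bot(A)+f^\bot(B)-f^\bot(M)=f(A)+f(B)-f(M)$ already yields both implications simultaneously, so your appeal to $(\M^\bot)^\bot=\M$ there is superfluous.

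That appeal is also where the genuine gap lies: the identity $(\M^\bot)^\bot=\M$ is \emph{false} in general. The paper shows in the very next claim that $\M^{\bot\bot}=\M\down$, with equality to $\M$ only when $\M$ is tight; so you cannot invoke it for the converse of~(b). The paper itself writes only ``the converse is similar'', meaning one reruns the symmetric identity
\[
\big(f^\bot(sA)-f^\bot(A)\big)+\big(f(s(P{-}A))-f(P{-}A)\big)=f(s)
\]
from the other side. But doing so still needs $f^\bot(s)=f(s)$, i.e.\ $f(sP)=f(P)$, and this does \emph{not} follow from ``$\M^\bot$ realizes $\A^\bot$'' alone: take $P=\{p\}$ with $f(s)=2$, $f(p)=1$, $f(sp)=2$; then $\M^\bot$ (with all values~$1$) realizes the one-point structure while $\M$ does not. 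In the paper this causes no trouble---only the forward direction of~(b) is ever used, and in every such context $s$ is already tight---but your converse step as written does not go through. If you want a clean repair, restrict the statement to tight $\M$, or argue via $\M^{\bot\bot}=\M\down$ together with the earlier claim that $\M$ and $\M\down$ realize the same structure (noting that the latter also tacitly assumes $s$ is tight).
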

\begin{proof}
a) Suppose $A\cup B$ is a partition of $M$, then $\mu(A)+\mu(B)=\mu(M)$. By 
the definition of $f^\bot$ we have
$$
   f^\bot(A)+f^\bot(B)-f^\bot(M)=f(B)+f(A)-f(M).
$$
If one of them is positive, then the other is positive, as required.

b) Let $M=sP$, $A\subseteq P$, then $\mu(sA)-\mu(A)=\mu(s)=f(s)$, thus
$$
   \big(f^\bot(sA)-f^\bot(A)\big) + \big(f(sP{-}A)-f(P{-}A)\big) =f(s).
$$
If $\M$ realizes $\A$, then $f(sP{-}A)-f(P{-}A)$ is either zero or $f(s)$
depending on whether $P{-}A\in\A$ or not. Consequently $f^\bot(sA)-f^\bot(A)$
is either zero of $f(s)$ depending on whether $P{-}A\notin\A$ or not. Thus
$f^\bot$ realizes $\A^\bot$. The converse is similar.
\end{proof}

The dual polymatroid $\M^\bot$ is always tight as
$$
   f^\bot(M{-}i)=f(i)+\mu(M{-}i)-f(M)=\mu(M)-f(M)=f^\bot(M).
$$
Consequently the dual of $\M^\bot$ is also tight, and if $\M$ was not tight,
the dual of $\M^\bot$ cannot be the same as $\M$. However, if $\M$ is tight, then it equals
$\M^\bot{}^\bot$, in particular $\M^\bot{}^\bot{}^\bot=\M^\bot$ always true.

\begin{claim}\label{claim:tight-dual}
{\upshape a)} Suppose $\M$ is tight. Then $\M^\bot{}^\bot=\M$, moreover $\M$ and $\M^\bot$ 
have the same value on singletons.
{\upshape b)} For every polymatroid $\M$, $\M^\bot{}^\bot=\M\down$.
\end{claim}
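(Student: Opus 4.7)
The natural approach is to compute $f^\bot$ explicitly on singletons and on the full ground set, since these are the only extra pieces of data needed to dualize again, and then to plug into the definition of $f^{\bot\bot}$. A direct substitution gives
\[
f^\bot(i) = f(M{-}i) + f(i) - f(M) = f(i) - p_i,
\]
where $p_i := f(M) - f(M{-}i)$ is the private info of $i$, and, with the convention $f(\emptyset)=0$,
\[
f^\bot(M) = \mu(M) - f(M).
\]
Consequently the singleton measure $\mu^\bot$ of $\M^\bot$ satisfies $\mu^\bot(A) = \mu(A) - \sum_{i\in A} p_i$. This already proves the ``same value on singletons'' half of (a), because tightness of $\M$ means $p_i=0$ for every $i\in M$.

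Next I would expand $f^{\bot\bot}(A) = f^\bot(M{-}A) + \mu^\bot(A) - f^\bot(M)$ using the formula for $f^\bot$ applied to $M{-}A$, namely $f^\bot(M{-}A) = f(A) + \mu(M{-}A) - f(M)$. Plugging this in and using the additivity $\mu(A) + \mu(M{-}A) = \mu(M)$, the terms $f(M)$ and $\mu(M)$ cancel and one is left with
\[
f^{\bot\bot}(A) = f(A) - \sum_{i\in A} p_i.
\]
This is routine bookkeeping, but it is the step that carries the content of the claim.

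It remains to identify the right-hand side with $f\down(A)$. Tightening at an element $j$ subtracts $p_j$ from $f(A)$ for every $A \ni j$ and leaves other values unchanged. Crucially, the private info of any $i \ne j$ is preserved by this operation, because both $(f\down j)(M)$ and $(f\down j)(M{-}i)$ drop by the same amount $p_j$ (each contains $j$). Hence single-element tightenings commute, the order in the definition of $\M\down$ is immaterial, and after tightening at every element the result on a subset $A$ is $f(A)$ minus $p_i$ for each $i \in A$. This matches the formula for $f^{\bot\bot}(A)$ derived above, proving (b); part (a) is the specialization where every $p_i$ vanishes, so that $f^{\bot\bot}=f$.

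The plan involves no real obstacle: the whole argument is algebraic manipulation organized around the identity $\mu(A) + \mu(M{-}A) = \mu(M)$, together with the elementary verification that the tightening operations commute. The one point that genuinely needs attention is the boundary case $A = M$ in the definition of $f^\bot$, which forces the convention $f(\emptyset) = 0$ used above and is harmless once noted.
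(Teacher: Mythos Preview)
Your proposal is correct. The route differs slightly from the paper's, though the underlying algebra is the same. The paper first proves part (a) directly under the tightness hypothesis (so that $p_i=0$ from the outset and $\mu^\bot=\mu$), and then deduces (b) by observing separately that $\M^\bot=(\M\down)^\bot$: since tightening at $i$ shifts the rank of every set containing $i$ by the same amount, in the expression $f(M{-}A)+\mu(A)-f(M)$ the shift cancels, so dualizing before or after tightening gives the same result; (b) then follows by applying (a) to $\M\down$. You instead carry the private-info terms $p_i$ through a single computation of $f^{\bot\bot}$ for an arbitrary $\M$, arriving at $f^{\bot\bot}(A)=f(A)-\sum_{i\in A}p_i$, and then recognize this as $f\down(A)$ after checking that the one-point tightenings commute. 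Your approach is more streamlined and proves (b) directly with (a) as a specialization; the paper's approach has the minor advantage of isolating the invariance $\M^\bot=(\M\down)^\bot$ as a separate fact, which is conceptually useful elsewhere.
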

\begin{proof}
a) We start with the second claim. By the assumption, $f(M)=f(M{-}i)$,
$$
   f^\bot(i)=f(M{-}i)+\mu(i)-f(M)=\mu(i)=f(i),
$$
as claimed. It means that $\mu^\bot(A)=\mu(A)$, and then
\begin{align*}
   f^\bot(M{-}A)&=f(A)+\mu(M{-}A)-f(M), \\
   f^\bot(M)   &= \mu(M)-f(M),
\end{align*}
thus
\begin{align*}
  f^\bot{}^\bot(A)& =f^\bot(M{-}A)+\mu^\bot(A)-f^\bot(M) ={} \\
       &=f(A)+\mu(M{-}A)+\mu(A)-\mu(M) = f(A),
\end{align*}
proving $\M^\bot{}^\bot=\M$.

b) It is enough to show that the dual of $\M$ and the dual of $\M\down$ are
the same, from here the claim follows by a). In $\M\down i$ the rank of every 
set containing $i\in M$ decreases by the same amount. In the expression
$$
   f(M{-}A) + \mu(A)-f(M)
$$
this amount is added once in the first two terms, and subtracted once in the
last term, thus it cancels.
\end{proof}

\subsection{Factor and principal extension}\label{subsec:factors}

Let $\M=(h,M)$ be a polymatroid. Partitions of the ground set
$M$ can be considered as equivalence classes of an equivalence relation on
$M$. Let $\cong$ be an equivalence relation on $M$, $N=M/{\cong}$ be the 
set of equivalence classes, and
$\phi:M\to N$ be the map which assigns to each element its equivalence class.
The \emph{factor of $\M$ by $\cong$}, denoted as $\M/{\cong}$, is the pair
$(g,N)$ where $g$ assigns the value $g: A\mapsto h(\phi^{-1}(A))$ to subsets
of $N$ (that is, union of complete equivalence classes). It is clear that
$\M/{\cong}$ is a polymatroid.

\smallskip
Let $a\in M$, and $\alpha\ge 0$ be a real number.
The \emph{principal extension $\M_{a,\alpha}$} is a one-point extension of
$\M$ defined on the set $M\cup \{a'\}$ assigning the value
$$
    h : a'A \mapsto \min\,\{ h(A)+\alpha, h(aA) \}
$$
to new subsets.
It is a routine to check that the principal extension is a polymatroid
\cite{lovasz}. Principal extension of an almost entropic polymatroid is
almost entropic. This is an immediate consequence of (and actually, is
equivalent to) a result of F.~Mat\'u\v s \cite[Theorem~2]{fmtwocon}, see also
\cite[Lemma 3]{entreg}. We state this result without proof.

\begin{theorem}[F.~Mat\'u\v s]\label{thm:aent-down}
If the polymatroid $\M$ is almost entropic, then so is the principal
extension $\M_{a,\alpha}$.\qed
\end{theorem}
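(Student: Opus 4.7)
The plan is to reduce to an asymptotic entropic construction. Since the aent cone is by definition the closure of the entropic cone, and the map $\M\mapsto\M_{a,\alpha}$ is continuous in the rank function, it suffices to prove: for every entropic $\M$ and every $\eps>0$ there exists an entropic polymatroid on $M\cup\{a'\}$ whose rank function is within $\eps$ of $\M_{a,\alpha}$. Approximating an aent $\M$ by entropic $\M_n$ and applying this construction to each $\M_n$ then yields a sequence of entropic extensions converging to $\M_{a,\alpha}$.

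For the core construction, fix an entropic realization $\xi=(\xi_i)_{i\in M}$ of $\M$ and a large integer $k$, take $k$ independent copies $\xi^{(1)},\dots,\xi^{(k)}$, and set $\eta_i=(\xi_i^{(1)},\dots,\xi_i^{(k)})$, so that $(\eta_i)_{i\in M}$ realizes $k\M$. Adjoin a fresh point $a'$ by hashing: let $\eta_{a'}=h^\ast(\eta_a)$, where $h^\ast$ is drawn from a $2$-universal family of maps into $\{0,1\}^{\lfloor k\alpha\rfloor}$. Then $\H(\eta_{a'}|\eta_a)=0$ and $\H(\eta_{a'}\eta_A)\le\H(\eta_{aA})=k\,h(aA)$ for every $A\subseteq M$, so the $h(aA)$ side of the minimum in the definition of the principal extension is respected for free. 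The matching lower bound required for $A\subseteq M$ not containing $a$ is
\[
   \H(\eta_{a'}\mid\eta_A)\ \ge\ \min\{\,k\alpha,\;k(h(aA)-h(A))\,\}-o(k),
\]
which is the leftover hash lemma combined with the asymptotic equipartition property: conditionally on a typical value of $\eta_A$, the block $\eta_a$ is near-uniform on a set of size $\approx 2^{k(h(aA)-h(A))}$, so a random $2$-universal hash of output length $\lfloor k\alpha\rfloor$ is either statistically close to uniform (when $\alpha\le h(aA)-h(A)$) or retains essentially all of the source's conditional entropy (in the opposite regime). These are exactly the two sides of the min.

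The main obstacle is selecting one deterministic $h^\ast$ that is good for all subsets $A\subseteq M$ simultaneously, and uniformly controlling the two sources of error. A union bound over the $2^{|M|}$ subsets combined with the exponentially small failure probability of a random $2$-universal hash fixes such an $h^\ast$ once $k$ is large enough; the AEP errors are also uniform in $k$. Scaling the resulting polymatroid by $1/k$ and letting $k\to\infty$ produces entropic polymatroids on $M\cup\{a'\}$ converging to $\M_{a,\alpha}$, completing the reduction and hence the theorem. An alternative route, closer to Mat\'u\v s's original argument in \cite{fmtwocon}, avoids explicit hashing by expressing $\M_{a,\alpha}$ as a polyhedral limit of convex combinations of entropic polymatroids and invoking closedness of the aent cone together with Theorem~2 of that paper.
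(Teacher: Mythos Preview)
The paper does not actually prove this theorem. The sentence immediately preceding the statement says ``We state this result without proof,'' and the theorem carries only a bare \qed; the paper simply attributes it to Mat\'u\v s \cite{fmtwocon} (and \cite{entreg}). So there is no in-paper argument for you to be compared against.

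That said, your plan is sound and is genuinely different in flavor from Mat\'u\v s's route. Mat\'u\v s works structurally, via his convolution/maximum-entropy coupling construction on polymatroids, and obtains the closure of the aent cone under principal extension as a consequence; your approach is operational, building $a'$ as a hash of the $a$-block in an i.i.d.\ realization and reading off both branches of the $\min$ from AEP plus the leftover hash lemma. Mat\'u\v s's argument yields further closure properties of the aent cone for free; yours is self-contained and uses only standard extractor technology. You even flag the alternative at the end, which is accurate.

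Two spots in your outline would each need a line to be airtight. First, the leftover hash lemma bounds statistical distance, and turning that into a Shannon-entropy lower bound costs a factor involving $\log$ of the alphabet size (a Fannes-type continuity estimate); since the alphabet grows with $k$ you must check this loss is $o(k)$. Second, in the regime $\alpha>h(aA)-h(A)$ the usual leftover-hash statement is vacuous; what you need is the separate (easy) fact that a $2$-universal hash into a range larger than the AEP-flattened support is nearly injective with high probability, so $\H(h^\ast(\eta_a)\mid\eta_A)\approx\H(\eta_a\mid\eta_A)$. Your derandomization by Markov plus a union bound over the $2^{|M|}$ subsets is fine, since $|M|$ is fixed while the per-subset failure probability decays with $k$.
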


Mat\'u\v s' proof guarantees the extension to be only almost entropic even
if $\M$ is entropic. In fact, there is an entropic polymatroid where some
principal extension is not entropic.

\smallskip

Principal extensions can be used to ``split atoms'' of a
polymatroid, which, in turn, will be used to prove that integer polymatroids
are factors of matroids. Let us see the details. In what follows $\M=(h,M)$
is a polymatroid.

\begin{lemma}\label{lemma:split1}
Let $a\in M$, and $\alpha_1,\alpha_2$ be non-negative numbers whose sum is
$h(a)$. There is a polymatroid $\M'=(h',a_1a_2\cup M{-}a)$ such that
$h'(a_i)=\alpha_i$, and $\M$ is a factor of $\M'$ collapsing $a_1$ and $a_2$
to $a$. Moreover, $\M$ is almost entropic if and only if so is $\M'$.
\end{lemma}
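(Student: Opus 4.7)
My plan is to build $\M'$ by performing two successive principal extensions at $a$ and then deleting the original element. Concretely, first form $\M^{(1)}=\M_{a,\alpha_1}$ on the ground set $M\cup\{a_1\}$ (renaming the new point $a'$ to $a_1$); the defining formula immediately gives $h^{(1)}(a_1)=\min\{\alpha_1,h(a)\}=\alpha_1$ since $\alpha_1\le h(a)$. Next, form $\M^{(2)}=(\M^{(1)})_{a,\alpha_2}$, calling the new point $a_2$; again $h^{(2)}(a_2)=\alpha_2$. Finally let $\M'$ be the restriction of $\M^{(2)}$ to the ground set $(M-a)\cup\{a_1,a_2\}$, i.e.\ delete the original $a$. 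Each of these three operations (two principal extensions and one restriction) preserves the polymatroid axioms, so submodularity, monotonicity and non-negativity of $\M'$ come for free.

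The content of the argument is then the identity
\[
  h'(a_1a_2A)=h(aA)\qquad\text{for every }A\subseteq M-a,
\]
together with the immediate $h'(A)=h(A)$ for such $A$. This pair of identities says precisely that collapsing the class $\{a_1,a_2\}$ to $a$ reproduces $\M$ as a factor of $\M'$. To prove the first identity, I would unfold the two nested minima coming from the principal extension definitions, obtaining
\[
  h'(a_1a_2A)=\min\bigl\{\min\{h(A)+\alpha_1,\,h(aA)\}+\alpha_2,\ h(aA)\bigr\},
\]
and then simplify using $\alpha_1+\alpha_2=h(a)$ together with the submodular estimate $h(A)+h(a)\ge h(aA)$; each of the three candidate values is at least $h(aA)$, so the minimum equals $h(aA)$ as required. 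This is the main bookkeeping step, but it is a short calculation rather than a real obstacle.

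For the almost entropic equivalence I would argue both directions by functoriality. If $\M$ is aent, two applications of Mat\'u\v s's Theorem~\ref{thm:aent-down} give that $\M^{(1)}$ and then $\M^{(2)}$ are aent; restriction is a coordinate projection of the rank-vector and sends entropic polymatroids to their marginals (which are entropic), hence it preserves aentness by continuity, so $\M'$ is aent. Conversely, if $\M'$ is aent, then the factor $\M'/{\cong}$ is aent as well: on entropic polymatroids the factor corresponds to grouping $\xi_{a_1}$ and $\xi_{a_2}$ into a single joint variable, which keeps the polymatroid entropic, and the factor operation is another coordinate projection that extends to the aent closure by continuity. Since the factor equals $\M$, the equivalence follows.

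The only genuinely delicate point I anticipate is checking that restriction and factor indeed preserve the \emph{almost} entropic property; both rest on the fact that these operations are continuous linear maps on the polymatroid cone and carry entropic polymatroids to entropic polymatroids, so the aent cone (the closure of the entropic cone) is mapped into itself. Everything else is essentially bookkeeping on top of Theorem~\ref{thm:aent-down}.
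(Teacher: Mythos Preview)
Your proposal is correct and follows essentially the same route as the paper's own proof: two successive principal extensions at $a$ with weights $\alpha_1,\alpha_2$, restriction to $(M{-}a)\cup\{a_1,a_2\}$, the same submodularity computation $h(A)+\alpha_1+\alpha_2=h(A)+h(a)\ge h(aA)$ to obtain $h'(a_1a_2A)=h(aA)$, and the same appeal to Theorem~\ref{thm:aent-down} together with preservation of aentness under restriction and factor. Your treatment of the aent equivalence is in fact slightly more explicit than the paper's (which simply declares the preservation under restriction and factor ``trivial''), but the argument is the same.
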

\begin{proof}
Let $\M'$ be the principal extension $\M_{a,\alpha_1}$ adding the new point
$a_1$, so that $M'=a_1\cup M$; then let $\M''$ be the principal extension
$\M'_{a,\alpha_2}$ adding the new point $a_2$. Then for each $A\subseteq
M{-}a$ we have
\begin{align*}
    h''(A) &= h(A), \\
    h''(a_1A) &= \min\,\{ h(A)+\alpha_1, h(aA) \}, \\
    h''(a_2A) &= \min\,\{ h(A)+\alpha_2, h(aA) \}, \\
    h''(a_1a_2A) &= \min\,\{ h(A)+\alpha_1+\alpha_2, h(aA) \}.
\end{align*}
As $h(A)+\alpha_1+\alpha_2=h(A)+h(a) \ge h(aA)$, we have
$h''(a_1a_2A)=h(aA)$. This shows that $\M''$ restricted to the ground set
$a_1a_1\cup M{-}a$ is the required splitting. If $\M$ is aent, then both
$\M'$ and $\M''$ are aent by Theorem~\ref{thm:aent-down}. A restriction and
a factor of an aent polymatroid is trivially aent, proving the last claim.
\end{proof}

\begin{lemma}\label{lemma:splitting-dual}
Let $\M=(f,aN)$ be tight, and suppose $\N=(g,a_1a_2N)$ splits $a$ in $\M$ as
$g(a_i)=\alpha_i$. Then $\N^\bot$ splits $a$ in $\M^\bot$ in the same way.
\end{lemma}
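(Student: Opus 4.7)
The plan is to verify directly that $\N^\bot$ splits $a$ in $\M^\bot$ with the prescribed singleton values $\alpha_1,\alpha_2$. By the definition of splitting (as in Lemma~\ref{lemma:split1}) this requires two things: first, $g^\bot(a_i)=\alpha_i$ for $i=1,2$; second, the collapse identifying $a_1,a_2$ to $a$ turns $\N^\bot$ into $\M^\bot$, i.e., for every $A\subseteq N$ one has $g^\bot(A)=f^\bot(A)$ and $g^\bot(a_1a_2A)=f^\bot(aA)$.

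I would start by recalling from the construction in Lemma~\ref{lemma:split1} the explicit form of $g$: for $A\subseteq N$, $g(A)=f(A)$, $g(a_1a_2A)=f(aA)$, and $g(a_iA)=\min\{f(A)+\alpha_i,f(aA)\}$. The next step is to observe that $\N$ itself is tight, so that Claim~\ref{claim:tight-dual}(a) becomes available. Tightness at $x\in N$ follows from $g(a_1a_2(N{-}x))=f(a(N{-}x))=f(aN)=g(a_1a_2N)$ using tightness of $\M$ at $x$. Tightness at $a_1$ reads $g(a_2N)=\min\{f(N)+\alpha_2,f(aN)\}=f(aN)$, which holds because tightness of $\M$ at $a$ gives $f(N)=f(aN)$; symmetrically at $a_2$. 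Consequently Claim~\ref{claim:tight-dual}(a) yields $g^\bot(a_i)=g(a_i)=\alpha_i$, disposing of the singleton requirement.

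For the factor relation I would just compute from the definition of duality. Write $\mu,\nu$ for the additive measures induced by $f,g$ on singletons; these agree on $N$, and $\nu(a_1)+\nu(a_2)=\alpha_1+\alpha_2=f(a)=\mu(a)$. For $A\subseteq N$,
\[
  g^\bot(A)=g(a_1a_2(N{-}A))+\nu(A)-g(a_1a_2N)=f(a(N{-}A))+\mu(A)-f(aN)=f^\bot(A),
\]
and for the thickened version,
\[
  g^\bot(a_1a_2A)=g(N{-}A)+\nu(a_1a_2A)-g(a_1a_2N)=f(N{-}A)+\mu(aA)-f(aN)=f^\bot(aA),
\]
the last equality using $aN{-}aA=N{-}A$. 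Together with the singleton values, these identities are exactly what it means for $\N^\bot$ to split $a$ in $\M^\bot$ with parameters $\alpha_1,\alpha_2$.

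No serious obstacle arises; the argument is essentially bookkeeping. The one point worth emphasising is that tightness of $\M$ propagates to tightness of $\N$ (thanks to the $\min$ in the principal-extension formula collapsing to $f(aN)$), which is precisely what allows Claim~\ref{claim:tight-dual}(a) to transfer singleton values to the dual side and makes the two displayed computations match on the nose.
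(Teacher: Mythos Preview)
Your computations for $g^\bot(A)$ and $g^\bot(a_1a_2A)$ match the paper's, and the detour through tightness of $\N$ to obtain $g^\bot(a_i)=\alpha_i$ via Claim~\ref{claim:tight-dual}(a) is a clean alternative. However, there is a gap: you stop after verifying the factor relation and the singleton values, declaring that ``these identities are exactly what it means for $\N^\bot$ to split $a$ in $\M^\bot$.'' That is too weak a reading. In this paper ``splits'' refers to the specific polymatroid produced by the principal-extension construction of Lemma~\ref{lemma:split1}---and you yourself rely on its explicit formula $g(a_iA)=\min\{f(A)+\alpha_i,f(aA)\}$ when checking tightness of $\N$. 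To show that $\N^\bot$ is \emph{that} split of $\M^\bot$ one must also verify
\[
  g^\bot(a_iA)=\min\bigl\{\,f^\bot(A)+\alpha_i,\;f^\bot(aA)\,\bigr\}\qquad(A\subseteq N),
\]
which the paper does by direct calculation. Polymatroids on $a_1a_2N$ with the prescribed singleton values that factor onto $\M^\bot$ are not unique in general, so your two checks do not single out $\N^\bot$ among them.

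This is not cosmetic. Theorem~\ref{thm:helgason}(c) builds $\phi(\M)$ by iterating the canonical splitting, and the inductive step needs that the dual of one canonical split equals the canonical split of the dual; only then does $\phi(\M)^\bot=\phi(\M^\bot)$ follow step by step. The missing computation is of the same flavour as the ones you already did: expand $g^\bot(a_1A)=g\bigl(a_2(N{-}A)\bigr)+\nu(a_1A)-g(a_1a_2N)$, insert $g\bigl(a_2(N{-}A)\bigr)=\min\{f(N{-}A)+\alpha_2,\,f(a(N{-}A))\}$, and simplify using $\alpha_1+\alpha_2=f(a)=f^\bot(a)$ (tightness of $\M$).
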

\begin{proof}
Let $A\subseteq N$, then $g(A)=f(A)$ and $g(a_1a_2A)=f(aA)$. Calculating
$g^\bot(A)$ one gets
\begin{align*}
   g^\bot(A)& =g(a_1a_2N{-}A)+\mu(A)-g(a_1a_2N) ={} \\
            &= f(aN{-}A)+\mu(A)-f(aN) = f^\bot(A),
\end{align*}
and similarly $g^\bot(a_1a_2A)=f^\bot(aA)$. Finally,
\begin{align*}
   g^\bot(a_1A) &= g(a_1a_2N{-}a_1A)+\mu(a_1A)-g(a_1a_2N) \\
      &= g(a_2N{-}A) + \mu(a_1)+\mu(A)-f(aN) \\
      &= \min\{ f(N{-}A)+\alpha_2,f(aN{-}A) \}+\alpha_1+\mu(A)-f(aN) \\
      &= \min\{ f(N{-}A)+\mu(aA)-f(aN), f(aN{-}A)+\mu(A)-f(aN)+\alpha_1 \}
       \\
      &= \min \{ f^\bot(aA),f^\bot(A)+\alpha_1 \},
\end{align*}
thus $\N^\bot$ splits $a$ as claimed as $f^\bot(a)=f(a)=\alpha_1+\alpha_2$
using that $\M$ is tight.
\end{proof}

\smallskip

Factors of a matroid are integer polymatroids. Helgason's theorem
\cite{helgason} says that the converse is true: every integer polymatroid is
a factor of some matroid. We need the following strengthening of this result.

\begin{theorem}\label{thm:helgason}
For each integer polymatroid $\M$ there is a matroid $\phi(\M)$ such that
{\upshape a)} $\M$ is a factor of $\phi(\M)$, {\upshape b)} $\M$ is aent if
and only if $\phi(\M)$ is aent, {\upshape c)} if $\M$ is tight, then
$\phi(\M^\bot)$ is the dual of $\phi(\M)$.
\end{theorem}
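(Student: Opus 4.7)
The plan is to construct $\phi(\M)$ by repeatedly applying the splitting operation of Lemma~\ref{lemma:split1} until every singleton of the ground set has rank one. Fix an enumeration of $M$ and process its elements in turn: if the current element $a\in M$ has rank $h(a)=k\ge 2$, split $a$ via Lemma~\ref{lemma:split1} into a rank-$1$ piece $a^{(1)}$ and a rank-$(k{-}1)$ residue, then split the residue into $a^{(2)}$ and a rank-$(k{-}2)$ residue, and so on, until $a$ has been replaced by $k$ rank-$1$ elements $a^{(1)},\ldots,a^{(k)}$. After doing this for every $a\in M$, every singleton has rank one (loops being excluded by convention), so the resulting polymatroid is a matroid, which I declare to be $\phi(\M)$.

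Parts (a) and (b) will then follow quickly. For (a), Lemma~\ref{lemma:split1} exhibits each individual split as having its predecessor as a factor via the collapse $a_1a_2\mapsto a$; composing all the collapse maps yields the equivalence relation on the ground set of $\phi(\M)$ that identifies each family $\{a^{(1)},\ldots,a^{(h(a))}\}$, and the corresponding factor is $\M$. For (b), the same lemma tells us that every single split preserves the aent property in both directions, so an easy induction on the total number of splits gives that $\phi(\M)$ is aent if and only if $\M$ is.

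For part (c), with $\M$ tight, the first step is to establish the auxiliary fact that the splitting operation preserves tightness: if $\M$ is tight and $\N$ is obtained by splitting $a$ into $a_1,a_2$ with weights $\alpha_1,\alpha_2$, then $\N$ is tight as well. Using the explicit formulas $g(a_iA)=\min\{h(A)+\alpha_i,h(aA)\}$ and $g(a_1a_2A)=h(aA)$ from the proof of Lemma~\ref{lemma:split1} together with the tightness identities $h(M{-}j)=h(M)$, each required equality $g(N'{-}j)=g(N')$ for the new ground set $N'$ reduces to a one-line check. With tightness propagating through the entire sequence, Lemma~\ref{lemma:splitting-dual} then applies at every step; and Claim~\ref{claim:tight-dual}(a) guarantees that $\M^\bot$ agrees with $\M$ on singletons so the same splitting prescription with the same weights is legitimate on $\M^\bot$. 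Iterating Lemma~\ref{lemma:splitting-dual} along the full sequence yields $\phi(\M^\bot)=(\phi(\M))^\bot$.

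The main obstacle is part (c), not because any one step is hard but because one must propagate tightness through the entire splitting sequence so that Lemma~\ref{lemma:splitting-dual} may be invoked at each stage, and must simultaneously verify that the splitting prescription defined from $\M$ really does transport consistently to $\M^\bot$. Both issues resolve cleanly once one unpacks the minimum in the principal-extension formula and exploits Claim~\ref{claim:tight-dual}(a); the only delicate bookkeeping is making sure the two induction hypotheses, on tightness and on commutation with duality, are carried together.
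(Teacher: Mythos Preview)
Your proposal is correct and follows essentially the same approach as the paper: repeated splitting via Lemma~\ref{lemma:split1} for (a) and (b), and iterated use of Lemma~\ref{lemma:splitting-dual} for (c). You are in fact more careful than the paper's terse argument, explicitly verifying that tightness propagates through each split so that Lemma~\ref{lemma:splitting-dual} applies at every stage and that the splitting weights match on $\M^\bot$ via Claim~\ref{claim:tight-dual}(a)---points the paper leaves implicit.
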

\begin{proof}
Let $\M$ be an integer polymatroid. The matroid $\phi(\M)$ is generated by a
series of splitting. If all singletons have rank zero or one,
then $\M$ is a matroid, and we are done. Otherwise some $a\in M$ has rank
$h(a)>1$. Using Lemma~\ref{lemma:split1} split $a$ into two with ranks $1$
and $h(a)-1$. All ranks in the split polymatroid $\M'$ remain integer, and by
Lemma~\ref{lemma:split1} $\M'$ is aent if and only if $\M$ is aent. Continue
this way to get the matroid $\phi(\M)$. Clearly $\M$ is a factor of
$\phi(\M)$, and c) holds by Lemma~\ref{lemma:splitting-dual}.
\end{proof}

\subsection{The duality conjecture}

Fix the connected access structure $\A\subset 2^P$ and consider all
polymatroids on the ground set $sP$ which realize $\A$. We are interested
in $\kappa(\A)$, the minimal complexity of these polymatroids. By Claim
\ref{claim:realize-down} the search can be restricted to tight polymatroids.
Suppose the infimum is attained by a tight polymatroid $\M$. (It is
attained as polymatroids form a closed set.)
Claim~\ref{claim:tight-dual} a) implies that $\M$ and $\M^\bot$ have the same
complexity. According to Claim~\ref{claim:poly-dual-connected} b)
$\M^\bot$ realizes $\A^\bot$, thus $\kappa(\A^\bot) \le \kappa(\A)$.
Applying the same reasoning to the dual structure we get
$\kappa(\A^\bot{}^\bot)\le\kappa(\A^\bot)$. As $\A^\bot{}^\bot$ and $\A$ are
the same, we have

\begin{claim}\label{claim:poly-complexity-equal}
For every access structure we have $\kappa(\A)=\kappa(A^\bot)$.
\qed
\end{claim}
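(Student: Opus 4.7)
The plan is to leverage directly the tools already established: the reduction to tight polymatroids (Claim~\ref{claim:realize-down}), the fact that the dual of a polymatroid realizes the dual access structure (Claim~\ref{claim:poly-dual-connected}~b), and the fact that tightening preserves singleton ranks under duality (Claim~\ref{claim:tight-dual}~a). The desired equality will then pop out by a short symmetric argument.

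First I would show that the infimum defining $\kappa(\A)$ is actually attained. After normalizing to $f(s)=1$ (which is harmless by homogeneity of $\sigma$), the requirement $\sigma(\M)\le K$ forces $f(i)\le K$ for every $i\in P$, and then submodularity bounds $f(A)$ for all $A\subseteq sP$. Thus the set of polymatroids realizing $\A$ with $f(s)=1$ and $\sigma(\M)\le K$ (for any $K$ strictly above the infimum) sits inside a compact subset of $\GM$ with $M=sP$; moreover, the realizability conditions $f(sA)=f(A)$ for $A\in\A$ and $f(sA)=f(A)+f(s)$ for $A\notin\A$ are all closed. So the minimum is taken on a non-empty compact set, hence attained by some polymatroid $\M$. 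By Claim~\ref{claim:realize-down} we may further replace $\M$ by $\M\down$ without changing $\sigma$, so $\M$ can be assumed tight.

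Now the rest is immediate. By Claim~\ref{claim:tight-dual}~a, the tight polymatroid $\M$ and its dual $\M^\bot$ agree on singletons, so
\[
  \sigma(\M^\bot) = \max\{f^\bot(i)/f^\bot(s) : i\in P\}
                  = \max\{f(i)/f(s) : i\in P\} = \sigma(\M)=\kappa(\A).
\]
By Claim~\ref{claim:poly-dual-connected}~b, $\M^\bot$ realizes $\A^\bot$, so $\kappa(\A^\bot)\le \sigma(\M^\bot)=\kappa(\A)$. Applying the same reasoning to the (still connected, by Claim~\ref{claim:access-dual-connected}) access structure $\A^\bot$ and using $\A^\bot{}^\bot=\A$ gives the reverse inequality $\kappa(\A)\le\kappa(\A^\bot)$, proving equality.

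There is no real obstacle here; the only step that requires a little care rather than a citation is the compactness argument showing the infimum is attained, and even that is a routine check once one normalizes $f(s)=1$ and bounds $\sigma$. Everything else is bookkeeping on top of the previously established duality claims.
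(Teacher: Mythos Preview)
Your proposal is correct and follows essentially the same approach as the paper: restrict to tight polymatroids via Claim~\ref{claim:realize-down}, use Claim~\ref{claim:tight-dual}~a) to see that $\M$ and $\M^\bot$ have the same complexity, use Claim~\ref{claim:poly-dual-connected}~b) to pass to $\A^\bot$, and finish by symmetry. The only difference is that you spell out the compactness argument for attaining the infimum, whereas the paper dispatches it with a parenthetical ``polymatroids form a closed set''; one minor imprecision is that tightening could in principle \emph{decrease} $\sigma$ rather than leave it unchanged, but since $\M$ already attains the minimum this is harmless.
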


Every access structure can be realized by some linearly representable
polymatroid, the complexity measure $\lambda(\A)$ defines the infimum of the
complexity of such representations. It is well-known that the conic hull of
linearly representable polymatroids is a closed subset of the entropic
polymatroids, and it is closed for taking duals. Therefore it is also closed for
tightening by Claim~\ref{claim:tight-dual} b). The corresponding complexity 
measure is $\lambda(\A)$, and the same reasoning as above gives

\begin{claim}\label{claim:linear-complexity-equal}
For every access structure we have $\lambda(\A)=\lambda(\A^\bot)$.
\qed
\end{claim}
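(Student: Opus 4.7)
The plan is to replay the proof of Claim \ref{claim:poly-complexity-equal} verbatim, replacing the full polymatroid cone $\GM$ by the closed cone $\mathcal{L}$ of conic combinations of linearly representable polymatroids on $sP$. The argument uses three closure properties of $\mathcal{L}$ gathered in the paragraph preceding the claim: (i) $\mathcal{L}$ contains some polymatroid realizing $\A$, so $\lambda(\A)$ is finite; (ii) $\mathcal{L}$ is topologically closed, so the infimum in the definition of $\lambda(\A)$ is attained; and (iii) $\mathcal{L}$ is closed under duality $\M\mapsto\M^\bot$, hence also under tightening $\M\mapsto\M\down$, since $\M\down=\M^\bot{}^\bot$ by Claim \ref{claim:tight-dual} b).

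Given these, I would fix a connected $\A$ and, by (i) and (ii), pick $\M\in\mathcal{L}$ realizing $\A$ with $\sigma(\M)=\lambda(\A)$. By Claim \ref{claim:realize-down}, $\M\down$ still realizes $\A$; by (iii) it lies in $\mathcal{L}$; and because the secret is already tight (so $f(s)$ is preserved) while singleton ranks can only shrink under tightening, $\sigma(\M\down)\le\sigma(\M)$. So I may assume $\M$ itself is tight. Then Claim \ref{claim:tight-dual} a) says $\M^\bot$ agrees with $\M$ on singletons, hence $\sigma(\M^\bot)=\sigma(\M)=\lambda(\A)$; Claim \ref{claim:poly-dual-connected} b) says $\M^\bot$ realizes $\A^\bot$; and (iii) keeps $\M^\bot$ in $\mathcal{L}$. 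Therefore $\lambda(\A^\bot)\le\lambda(\A)$, and the reverse inequality follows by applying the same reasoning to $\A^\bot$ and invoking $(\A^\bot)^\bot=\A$.

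The only ingredient that is not purely mechanical is property (iii). For a single matroid represented by the generator matrix of a linear code, the dual matroid is represented by the parity-check matrix, which is the classical code/matroid duality. The duality formula $f^\bot(A)=f(M{-}A)+\mu(A)-f(M)$ is affine-linear in $f$ (the measure $\mu$ on singletons being itself linear in $f$), so the operation $\M\mapsto\M^\bot$ commutes with taking non-negative linear combinations; this promotes the matroid-level statement to closure of the conic hull $\mathcal{L}$. I expect this to be the only place where any verification is required, and it is standard folklore in the references already cited in the preceding paragraph.
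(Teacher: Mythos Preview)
Your proposal is correct and is exactly the argument the paper intends: the claim is stated with an immediate \qed because the preceding paragraph has already recorded that the conic hull of linearly representable polymatroids is closed, contains a realization of every $\A$, and is stable under duality (hence under tightening via Claim~\ref{claim:tight-dual} b)), after which the reasoning of Claim~\ref{claim:poly-complexity-equal} carries over verbatim. Your extra remark on why $\mathcal L$ is closed under $\M\mapsto\M^\bot$ (code/matroid duality plus affine-linearity of the dual formula) is a welcome elaboration of what the paper leaves as ``well-known.''
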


Every explicitly defined access structure $\A$ with known exact complexity value
$\sigma(\A)$ satisfies $\lambda(\A)=\sigma(\A)=\kappa(\A)$ -- consequently
the same is true for the dual structure, and then
$\sigma(\A)=\sigma(\A^\bot)$. It is a long-standing open problem whether the
statement similar to Claims~\ref{claim:poly-complexity-equal} and
\ref{claim:linear-complexity-equal} holds for the entropic complexity
$\sigma$.

\begin{conjecture}[complexity of dual structure]\label{conj:dual-or-all}
For every access structure we have $\sigma(\A)=\sigma(\A^\bot)$.
\end{conjecture}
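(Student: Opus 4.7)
The strategy is to mimic the arguments used for Claims \ref{claim:poly-complexity-equal} and \ref{claim:linear-complexity-equal}. By symmetry it suffices to show $\sigma(\A^\bot)\le\sigma(\A)$. Take a sequence of entropic polymatroids $\M_n=(f_n,sP)$ realizing $\A$ with $\sigma(\M_n)\to\sigma(\A)$. By Claim \ref{claim:realize-down}, each $\M_n\down$ still realizes $\A$, and by Mat\'u\v s' result from \cite{entreg} cited in Section~\ref{subsec:polymatroids} it is almost entropic. Since tight polymatroids satisfy $\M^\bot{}^\bot=\M$ and duality preserves rank on singletons (Claim \ref{claim:tight-dual}a), the dual $(\M_n\down)^\bot$ realizes $\A^\bot$ by Claim \ref{claim:poly-dual-connected}b and has the same complexity as $\M_n\down$, which coincides with that of $\M_n$.

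The first subsidiary task is to verify that the aent cone is closed under the duality map $A\mapsto f(M{-}A)+\mu(A)-f(M)$. This is a linear transformation on the rank vector, so the point is to show that it preserves the entropic cone $\GM^*$; one can try to do this by exhibiting, for each distribution $\xi$, a companion distribution whose joint entropies reproduce the dual rank values, e.g.\ via random linear combinations over a suitably chosen large alphabet. Granting this, the construction above yields $\bar\sigma(\A^\bot)\le\sigma(\A)$, and by symmetry $\bar\sigma(\A)=\bar\sigma(\A^\bot)$. To upgrade from $\bar\sigma$ to $\sigma$ the natural move is to perturb $(\M_n\down)^\bot$ slightly into the interior of the aent cone, where by Mat\'u\v s' interior-is-entropic theorem \cite{fmtwocon} it automatically becomes entropic; executed carefully, such a perturbation would change the complexity by an arbitrarily small amount.

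The main obstacle is that the realization condition consists of \emph{exact} equalities, $f(sA)=f(A)$ on qualified sets and $f(sA)=f(A)+f(s)$ on unqualified sets, and small perturbations destroy both. On the recoverability side one could plausibly absorb the error by enlarging the secret with independent randomness, in the spirit of Kaced's construction detailed in Section~\ref{sec:final}. On the independence side there is no analogous patch --- the excerpt explicitly conjectures that independence errors cannot be removed by adding entropy to the secret, which is precisely the reason $\bar\sigma$ and $\sigma$ are expected to differ. Hence the ``tighten, dualise, perturb'' strategy founders exactly at the gap between the two definitions of realization, and any successful proof would most likely need to avoid leaving the entropic cone at all, for example by manufacturing an entropic realization of $\A^\bot$ directly from a joint distribution realizing $\A$ via some random-variable-level duality --- a construction I do not see how to carry out.
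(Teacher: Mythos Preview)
The statement you are attempting to prove is \emph{Conjecture}~\ref{conj:dual-or-all}; the paper does not prove it, and in fact states that it ``is probably not true.'' So there is no proof in the paper to compare against, and your proposal should be read as an attack on an open problem.

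Your self-criticism in the final paragraph is apt but does not go far enough. The ``first subsidiary task'' --- showing that the duality map preserves the entropic (or almost entropic) cone --- is not merely hard, it is \emph{false}, and this is precisely the content of Section~\ref{sec:final}. Kaced's example (Claim~\ref{claim:tarik}) exhibits a tight entropic polymatroid whose dual violates the \MMRV{} inequality and is therefore not even almost entropic. Hence the step ``$(\M_n\down)^\bot$ is aent'' fails in general, and the intermediate conclusion $\bar\sigma(\A)=\bar\sigma(\A^\bot)$ that you derive from it is itself refuted by Theorem~\ref{thm:main}: there is an almost ideal $\A$ (so $\bar\sigma(\A)=1$) whose dual is not almost ideal (so $\bar\sigma(\A^\bot)>1$). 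The tighten--dualise--perturb programme therefore collapses before the perturbation stage is even reached.

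What survives of your outline is the observation that the argument behind Claims~\ref{claim:poly-complexity-equal} and~\ref{claim:linear-complexity-equal} works for any polymatroid class that is closed under both tightening and duality; the paper makes exactly this point. The entropic and almost entropic classes fail the second closure property, which is why the conjecture remains open and why the paper instead proves the negative result for the relaxed (almost entropic) model.
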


The conjecture is probably not true, but even the particular case when $\A$
is an ideal access structure resisted all efforts. Recall, that $\A$ is
ideal if it can be realized by an ideal entropic polymatroid, or,
equivalently, by an ideal distribution scheme.

\begin{conjecture}[dual of ideal structure]\label{conj:dual-of-ideal}
The dual of an ideal access structure is ideal.
\end{conjecture}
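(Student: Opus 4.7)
The plan is to exploit the self-duality of tight polymatroids to transfer an ideal realization of $\A$ to one of $\A^\bot$. Let $\M=(f,sP)$ be an entropic polymatroid realizing the connected ideal access structure $\A$ with $\sigma(\M)=1$; normalize so that $f(s)=1$. Claim~\ref{claim:minimum-is-1} gives $f(i)\ge 1$ for every important $i\in P$, and the complexity assumption forces $f(i)=1$. Corollary~\ref{corr:ideal-is-tight} then makes every $i\in P$ tight, while $s$ is tight because $f(sP)=f(P)$. Thus $\M$ is tight everywhere.

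Once $\M$ is tight, the dual construction does most of the work. Claim~\ref{claim:poly-dual-connected}(b) says $\M^\bot$ realizes $\A^\bot$, and Claim~\ref{claim:tight-dual}(a) says $\M^\bot$ agrees with $\M$ on singletons, so $f^\bot(s)=f^\bot(i)=1$ for every $i\in P$ and hence $\sigma(\M^\bot)=1$. What remains -- and what separates Conjecture~\ref{conj:dual-of-ideal} from the $\kappa$-duality already settled in Claim~\ref{claim:poly-complexity-equal} -- is to show that $\M^\bot$ is \emph{entropic}, not merely a polymatroid. Toward that, I would first use the matroid-port correspondence that Section~\ref{sec:ideal-matroid} will formalize to argue that $\M$ is (a scalar multiple of) a matroid: every singleton has rank $1$, and realization forces $f(sA)-f(A)\in\{0,1\}$ for each $A\subseteq P$, which, together with submodularity and tightness, should pin $f$ down to an integer matroid rank function. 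Granting this, $\M^\bot$ is the classical matroid dual: still a matroid, still tight, still realizing $\A^\bot$ with complexity $1$.

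The hard step is the last one: proving that the matroid dual of an entropic matroid is entropic. The remark at the end of Section~\ref{subsec:polymatroids} says a matroid is entropic iff its tight part is, so if we knew $\M^\bot$ were almost entropic, tightness would conclude the argument; but Theorem~\ref{thm:aent-down} does not obviously transfer along matroid duality, and passing from aent-and-tight to entropic for matroids is itself a genuine representability question. I expect this to be the true obstacle -- essentially the representability problem Section~\ref{sec:ideal-matroid} formalises -- and not any of the bookkeeping above. Given that the paper's main construction refutes the relaxed $\bar\sigma$-version on $174$ participants, one should moreover be prepared for Conjecture~\ref{conj:dual-of-ideal} itself to fail along the same lines, with the gap between entropic and almost entropic matroid duals being exactly where the counterexample machinery would have to enter.
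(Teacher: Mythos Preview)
The statement you are attempting to prove is a \emph{conjecture}, and the paper does not prove it; on the contrary, the paper states that ``the conjecture is probably not true'' and lists it again as an open problem in Section~\ref{sec:conclusion}. So there is no proof in the paper to compare your attempt against.

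That said, your proposal is not really a proof attempt either, and you are candid about this. What you have written is precisely the reduction that the paper carries out in Theorem~\ref{thm:main-ideal}: normalize an ideal entropic realization, use Corollary~\ref{corr:ideal-is-tight} to get tightness, invoke Theorem~\ref{thm:b-k} to recognize $\M$ as a matroid, dualize, and observe that everything goes through \emph{except} the step ``the dual of an entropic matroid is entropic.'' The paper formalizes exactly this, proving that Conjecture~\ref{conj:dual-of-ideal} is \emph{equivalent} to the statement that every entropic matroid has an entropic dual. Your identification of this as ``the true obstacle'' is correct and matches the paper's analysis.

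Your closing remark---that one should be prepared for the conjecture to fail---is also in line with the paper's stance. The paper refutes the almost-entropic analogue (Theorem~\ref{thm:main}) via Kaced's construction, but explicitly leaves the entropic case open, noting that no entropic matroid is known that is not linearly representable, which blocks the obvious route to a counterexample. In short: your bookkeeping is right, your diagnosis of the gap is right, and the gap is genuinely open.
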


Refuting the second conjecture does not necessarily refutes Conjecture
\ref{conj:dual-or-all} as the dual might be non-ideal while having
complexity 1. In Section~\ref{sec:ideal-matroid} we prove that Conjecture
\ref{conj:dual-of-ideal} is equivalent to a question about matroid
representability. Using results of that section, and a construction by Tarik
Kaced \cite{tarik} the duality question for almost ideal schemes is settled.

\section{Ideal structures and matroids}\label{sec:ideal-matroid}

First we give a self-contained proof of a somewhat extended result of
Blakley and Kabatianski \cite{blakley-kabatianski,fmquantoids},
which, in turn, extends a result of Brickell and Davenport
\cite{brickell-davenport} connecting ideal access structures and matroids.
Using this connection we present a
statement about matroid representability which is equivalent to Conjecture
\ref{conj:dual-of-ideal}.

Fix the connected access structure $A\subset 2^P$ and suppose the polymatroid
$\M=(f,sP)$ realizes it. Assume furthermore that $\M$ has
complexity 1, that is, the rank of all singletons equals $f(s)$.
The following lemmas establish some structural properties of $\M$. In the
lemmas $A$ is a subset of $P$, $a\in P$, and $s$ denotes the secret.

\begin{lemma}\label{lemma:1}
Suppose $A\in\A$ and $A{-}a\notin\A$. Then $f(A)-f(A{-}a)=f(s)$.
\end{lemma}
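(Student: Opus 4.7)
The plan is to sandwich the quantity $f(A) - f(A{-}a)$ between two copies of $f(s)$, using the defining equations for realization together with the complexity-1 hypothesis.

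First I unpack the hypotheses via the definition of realizing $\A$. Since $A\in\A$, the defining condition a) gives $f(sA)=f(A)$. Since $A{-}a\notin\A$, condition b) gives $f(s(A{-}a))=f(A{-}a)+f(s)$. The complexity-1 assumption is that $f(i)=f(s)$ for every $i\in P$; in particular $f(a)=f(s)$.

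For the upper bound $f(A)\le f(A{-}a)+f(s)$ I apply submodularity (in the additive form for disjoint union) to the decomposition $A=\{a\}\cup(A{-}a)$, obtaining $f(A)\le f(a)+f(A{-}a)=f(s)+f(A{-}a)$. For the matching lower bound, monotonicity applied to the inclusion $sA\supseteq s(A{-}a)$ yields $f(sA)\ge f(s(A{-}a))$. Substituting the two realization equations into this inequality transforms it into
\[
    f(A) \;\ge\; f(A{-}a)+f(s).
\]
Combining the two bounds gives the claimed equality.

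There is essentially no obstacle here: the only subtlety is recognizing that the complexity-1 assumption is what makes the trivial submodular upper bound $f(A)\le f(A{-}a)+f(a)$ tight enough to match the monotonicity lower bound. The argument is a clean illustration of the slogan emphasized earlier in the paper, that reasoning at the polymatroid level is cleaner than reasoning with entropies directly.
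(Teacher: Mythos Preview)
Your proof is correct and essentially identical to the paper's. Both arguments sandwich $f(A)-f(A{-}a)$ between $f(a)$ (via subadditivity/submodularity) and $f(s)$ (via monotonicity of $f(sA)\ge f(s(A{-}a))$ combined with the two realization identities), then invoke $f(a)=f(s)$; the only difference is that the paper writes this as a single chain $f(a)\ge f(A)-f(A{-}a)\ge f(s)$ rather than two separate bounds.
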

\begin{proof}
By submodularity of the rank function $f$, we have
\begin{align*}
  f(a) &\ge f(A)-f(A{-}a) = f(sA)-\big(f(sA{-}a)-f(s)\big) = {} \\
       &= \big(f(sA)-f(sA{-}a)\big) +f(s) \ge f(s).
\end{align*}
As $f(a)=f(s)$, the conclusion follows.
\end{proof}
\begin{lemma}\label{lemma:2}
Let $a\in A'\subseteq A$, Suppose $A{-}a$ and $A'$ are qualified and $A'{-}a$
is not. Then $f(A)=f(A{-}a)$.
\end{lemma}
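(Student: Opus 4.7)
The plan is to prove $f(A)=f(A-a)$ by combining submodularity with the four hypotheses (qualified, unqualified, and complexity~1) to squeeze the quantity $f(A)-f(A-a)$ between two equal bounds. Since $A-a$ is qualified and $\A$ is upward closed, $A$ itself is qualified, so $f(sA)=f(A)$. Similarly $f(sA-a)=f(A-a)$ and $f(sA')=f(A')$, while the hypothesis that $A'-a$ is unqualified gives $f(sA'-a)=f(A'-a)+f(s)$. Moreover, applying Lemma~\ref{lemma:1} to the qualified set $A'$ and its unqualified subset $A'-a$ yields $f(A')-f(A'-a)=f(s)$.

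The key step is to apply submodularity to the pair $X=sA-a$ and $Y=sA'$. Because $A'\subseteq A$ and $a\in A'$, one checks $X\cup Y=sA$ and $X\cap Y=sA'-a$, so
\begin{equation*}
f(sA-a)+f(sA') \;\ge\; f(sA)+f(sA'-a).
\end{equation*}
Substituting the four equalities above turns this into
\begin{equation*}
f(A-a)+f(A') \;\ge\; f(A)+f(A'-a)+f(s),
\end{equation*}
and then replacing $f(A')$ by $f(A'-a)+f(s)$ cancels terms and leaves $f(A-a)\ge f(A)$. Combined with monotonicity $f(A)\ge f(A-a)$, equality follows.

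There is no real obstacle here; the only thing to get right is the choice of sets for the submodular inequality. The natural alternative pairs (for example $sA-a$ with $A'$, or $sA'-a$ with $A$) only yield the weaker bound $f(A)-f(A-a)\le f(s)$, which is not sharp enough. Using $sA'$ instead of $A'$ on the right side of the pair is what lets Lemma~\ref{lemma:1} absorb the $+f(s)$ term and force the desired equality.
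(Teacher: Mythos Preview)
Your proof is correct and is essentially the same argument as the paper's. The paper writes the submodular step as the diminishing-returns inequality $f(sA)-f(sA{-}a)\le f(sA'){-}f(sA'{-}a)$, which is just a rearrangement of your inequality $f(sA{-}a)+f(sA')\ge f(sA)+f(sA'{-}a)$; both then plug in the qualified/unqualified equalities and Lemma~\ref{lemma:1} exactly as you do.
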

\begin{proof}
For qualified subsets $f(sA')=f(A')$, etc., for the unqualified subset
$f(sA'{-}a) =f(A'{-}a)+f(s)$. Thus
\begin{align*}
  f(A)-f(A{-}a) &= f(sA)-f(sA{-}a) \le f(sA')-f(sA'{-}a) ={}\\
  &= \big( f(A')-f(A'{-}a)\big) - f(s) = 0,
\end{align*}
where the inequality follows from submodularity and the last equality from
Lemma~\ref{lemma:1}. Thus $0 \le f(A)-f(A{-}a)\le
0$, proving the claim.
\end{proof}
\begin{lemma}\label{lemma:3}
Suppose $f(A)-f(A{-}a)=f(s)$, and $a\in A'\subseteq A$. Then
$f(A')-f(A'{-}a)=f(s)$.
\end{lemma}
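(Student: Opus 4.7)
The plan is to sandwich the quantity $f(A')-f(A'{-}a)$ between two copies of $f(s)$ using two easy consequences of submodularity: the ``decreasing marginals'' form of submodularity gives the lower bound, and subadditivity of $f$ together with the complexity-1 assumption $f(a)=f(s)$ gives the upper bound.

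First I would invoke the standard reformulation of submodularity: whenever $X\subseteq Y$ and $a\notin Y$, the marginal contribution of $a$ is non-increasing, that is,
\[
   f(X\cup a)-f(X) \;\ge\; f(Y\cup a)-f(Y).
\]
Applying this with $X = A'{-}a$ and $Y = A{-}a$ (so $X\subseteq Y$ because $A'\subseteq A$, and $a\notin Y$) yields
\[
   f(A')-f(A'{-}a) \;\ge\; f(A)-f(A{-}a) \;=\; f(s),
\]
using the hypothesis of the lemma.

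Next I would obtain the matching upper bound from subadditivity, which follows from submodularity applied to the pair $\{A'{-}a,\{a\}\}$ (with the convention $f(\emptyset)=0$ to cover the degenerate case $A'=\{a\}$, in which the identity is immediate since $f(a)=f(s)$):
\[
   f(A') \;\le\; f(A'{-}a) + f(a) \;=\; f(A'{-}a) + f(s),
\]
where the last equality uses the complexity-1 assumption. Rearranging gives $f(A')-f(A'{-}a)\le f(s)$.

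Combining the two bounds forces equality, $f(A')-f(A'{-}a)=f(s)$, which is the claim. There is no real obstacle here; the argument is a direct two-line submodularity sandwich, and the only thing to watch is the edge case $A'=\{a\}$, which is handled transparently by treating $f(\emptyset)=0$ or noting that in that case the conclusion reduces to $f(a)=f(s)$.
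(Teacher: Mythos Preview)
Your proof is correct and is essentially identical to the paper's: both sandwich $f(A')-f(A'{-}a)$ between $f(A)-f(A{-}a)=f(s)$ (via the decreasing-marginals form of submodularity) and $f(a)=f(s)$ (via subadditivity and the complexity-1 assumption). The paper compresses this into a single chain of inequalities, but the reasoning is the same.
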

\begin{proof}
By submodularity, $f(A)-f(A{-}a)\le f(A')-f(A'{-}a)\le f(a)$. As both sides
equal $f(s)$, the claim follows.
\end{proof}

\begin{theorem}[Blakley--Kabatianski]\label{thm:b-k}
Let $\A\subset 2^P$ be a connected access structure and $\M=(f,sP)$ be a
polymatroid realizing $\A$ such that $f(a)=f(s)=1$ for all $a\in P$.
Then $\M$ is a matroid which is uniquely determined by the access structure.
\end{theorem}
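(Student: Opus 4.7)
Since $f(\emptyset) = 0$ and $f(x) = 1$ for every singleton $x \in sP$, proving $\M$ is a matroid reduces, by induction on $|A|$, to showing each marginal $d(A, a) := f(A) - f(A{-}a)$ lies in $\{0, 1\}$; submodularity already confines $d(A, a)$ to $[0, 1]$, so fractional values must be excluded. The easiest subcase is $a = s$: the realization conditions immediately yield $d(A, s) = 0$ when $A{-}s \in \A$ and $d(A, s) = 1$ otherwise.

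For $a \in P$ I split on whether $A \in \A$. When $A \in \A$, pick a minimal qualified $M \subseteq A$ and any $a \in M$; the minimality of $M$ certifies the auxiliary condition $\pi(A, a):$ there exists $A' \subseteq A$ with $a \in A'$, $A' \in \A$ and $A'{-}a \notin \A$. Lemma \ref{lemma:1} then closes the subcase $A{-}a \notin \A$ with $d(A, a) = 1$, and Lemma \ref{lemma:2} (applied to $A' = M$) closes the subcase $A{-}a \in \A$ with $d(A, a) = 0$.

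When $A \notin \A$, the connectedness of $\A$ makes every $a \in A$ important in $P$, so there is a critical pair $(B, B \cup a)$ with $B \notin \A$ and $B \cup a \in \A$. Consider the enlargement $A^* := A \cup B$; by construction $A^* \in \A$ and $\pi(A^*, a)$ holds via $B \cup a$. If $(a, B)$ can be chosen so that $A^*{-}a = (A{-}a) \cup B \notin \A$, then Lemma \ref{lemma:1} applied to $A^*$ gives $d(A^*, a) = 1$, and Lemma \ref{lemma:3} transports this value down to $d(A, a) = 1$. Once $\M$ is known to be a matroid, the same inductive recipe reads each increment $d(A, a)$ off from $\A$ alone (minimal qualified subsets, critical pairs, and qualification tests), so $f$ is uniquely determined by $\A$.

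The main obstacle I foresee is the subcase where no good choice of $(a, B)$ exists, that is, for every important $a \in A$ and every critical $B$ for $a$ the enlargement $(A{-}a) \cup B$ remains qualified. In that event Lemma \ref{lemma:2} only delivers $d(A^*, a) = 0$, and submodularity then yields the wrong-direction inequality $d(A, a) \geq d(A^*, a) = 0$, which is too weak to exclude a fractional $d(A, a) \in (0, 1)$. Breaking this deadlock is where the ideal-realizability hypothesis must be used in earnest, likely via a combinatorial argument on the collection of minimal qualified sets meeting $A$, showing that this pathological configuration cannot coexist with an ideal polymatroid $\M$ on $sP$ whose singletons all have rank $f(s)$.
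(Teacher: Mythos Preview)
Your plan is correct up through the qualified case, and you have correctly located the real difficulty: the unqualified subcase in which, for every choice of $a\in A$ and every critical $B$, the set $(A{-}a)\cup B$ is qualified. This case genuinely occurs, so hoping to rule it out combinatorially will not work; in that situation one must in fact conclude $d(A,a)=0$. What you are missing is the mechanism that upgrades the weak inequality $d(A,a)\ge d(A^*,a)$ to an equality.

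The paper supplies this with two ideas. First, instead of fixing $a$ and then picking a critical $B$, choose an unqualified $B$ with $AB$ qualified so that $|B{-}A|$ is minimal and, subject to that, $|A\cap B|$ is maximal; only afterwards fix $a\in A{-}B$. This extremal choice forces $AB{-}k\notin\A$ for every $k\in B{-}A$ and $aB\in\A$ for every $a\in A{-}B$. Second, and this is the key step you did not find, apply Lemmas~\ref{lemma:1} and~\ref{lemma:3} not to the element $a$ but to the elements $k\in B{-}A$: from $AB\in\A$ and $AB{-}k\notin\A$ one gets $f(AB)-f(AB{-}k)=1$, and Lemma~\ref{lemma:3} pushes this down to every $A'$ with $k\in A'\subseteq AB$. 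Telescoping over $B{-}A$ yields $f(AB)-f(A)=|B{-}A|$ and, since $a\notin B{-}A$, also $f(AB{-}a)-f(A{-}a)=|B{-}A|$. Subtracting gives the exact equality $d(A,a)=d(AB,a)$, reducing the unqualified case to the already-handled qualified one: if $AB{-}a\notin\A$ then Lemma~\ref{lemma:1} gives $d(AB,a)=1$, and if $AB{-}a\in\A$ then Lemma~\ref{lemma:2} with $A'=aB$ (qualified by the extremal choice) gives $d(AB,a)=0$.
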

\begin{proof}
All singletons have rank 1, thus $\M$ is a matroid if all ranks are
integer. The basic idea is to show that for any subset $A$ of
the ground set $sP$ one can find an element $a$ of $A$ such that
$f(A)-f(A{-}a)$ is either zero or one. The additional claim that $\M$ is
uniquely determined by $\A$ follows from the fact that for the chosen 
element $a\in A$ the value of $f(A)-f(A{-}a)$ depends only on the 
access structure, and not on the particular realization.

If the subset contains the secret $s$, then $f(sA)-f(A)$ is either zero or
$f(s)=1$ depending on whether $A\in\A$ or $A\notin\A$, which settles this
case. So assume $A\subseteq P$. 

When $A$ is qualified, then there are two cases. If $A{-}a$ is not qualified
for some $a\in A$, then Lemma~\ref{lemma:1} gives that this difference is
$f(s)=1$. If all $A{-}a$ is qualified, then pick a minimal qualified
$A'\subseteq A$ and use Lemma~\ref{lemma:2} with any $a\in A'$.

Thus assume $A$ is unqualified. As $\A$ is connected, there is an
unqualified subset $B$ such that $AB$ is qualified (pick any element of $A$
and let $B$ show that this element is important).
Choose such an unqualified $B$ such that the set $B{-}A$ has minimal
cardinality, and within this constrain $A\cap B$ has maximal cardinality. Then
$AB{-}k$ is unqualified for any $k\in B{-}A$ (as otherwise $B{-}A$ is not
minimal), and $aB$ is qualified for any $a\in A{-}B$ (as otherwise $A\cap
B$ is not maximal). Fix $a\in A{-}B$.
With any $k\in B{-}A$ we have that $AB$ is qualified, $AB{-}k$ is not.
Lemma~\ref{lemma:1} gives $f(AB)-f(AB{-}k)=1$, and by Lemma~\ref{lemma:3},
$f(A')-f(A'{-}k)=1$ for all $k\in A'\subseteq AB$. By induction this gives
both $f(AB)-f(A)=|A{-}B|$ and $f(AB{-}a)-f(A{-}a)=|A{-}B|$. Therefore
$f(A)-f(A{-}a) = f(AB)-f(AB{-}a)$. Now $AB$ is qualified. If $AB{-}a$ is
unqualified, then by Lemma~\ref{lemma:1} this difference is $f(s)=1$. If 
$AB{-}a$ is qualified, then $f(AB)=f(AB{-}a)$ using Lemma~\ref{lemma:2} 
with $A'=aB$.
\end{proof}

The main result of this section is the equivalence of a statement about
matroid representability and Conjecture~\ref{conj:dual-of-ideal}. Recall
that $\M$ is an \emph{entropic matroid} if for some positive $\lambda$ the
polymatroid $\lambda\M$ is entropic.

\begin{theorem}\label{thm:main-ideal}
The following statements are equivalent.

\noindent\hangindent\parindent\hangafter=1
\hbox to\parindent{\hss \upshape a)}
The dual of every ideal access structure is ideal.

\noindent\hangindent\parindent\hangafter=1
\hbox to\parindent{\hss \upshape b)}
The dual of every entropic matroid is entropic.
\end{theorem}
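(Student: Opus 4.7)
The plan is to use Theorem~\ref{thm:b-k} as a two-way bridge between ideal access structures and entropic matroids. By that theorem, any complexity-$1$ entropic polymatroid realising a connected access structure $\A$ becomes, after rescaling all singleton ranks to $1$, a matroid uniquely determined by $\A$; call it $\M_\A$. So ``$\A$ ideal'' is equivalent to ``$\M_\A$ is an entropic matroid'', and Claim~\ref{claim:poly-dual-connected}~b) says $\M_\A^\bot$ realises $\A^\bot$. The whole theorem then reduces to the identification $\M_\A^\bot=\M_{\A^\bot}$, which comes from the uniqueness in Theorem~\ref{thm:b-k}.

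For b)$\Rightarrow$a): given ideal $\A$ with entropic realisation $\M$ of complexity $1$ and $c=f(s)>0$, the rescaling $(1/c)\M$ is a matroid by Theorem~\ref{thm:b-k} and an entropic matroid because $c\cdot(1/c)\M=\M$ is entropic. Hypothesis b) makes its dual an entropic matroid, so some positive rescaling of that dual is simultaneously entropic, realises $\A^\bot$ via Claim~\ref{claim:poly-dual-connected}~b), and (using that $(1/c)\M$ is connected by Claim~\ref{claim:connected-to1}, so its dual has no coloops and every singleton has rank $1$) has complexity~$1$; hence $\A^\bot$ is ideal.

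For a)$\Rightarrow$b): take a connected entropic matroid $\M$ on ground set $M$ (the connected reduction is harmless, since marginals of an entropy profile of $\lambda\M$ realise each connected component and matroid duality distributes over direct sum). Pick any $s\in M$; connectedness forces $h(M{-}s)=h(M)$ and $h(s)=1$, so the port $\A=\{A\subseteq M{-}s:h(sA)=h(A)\}$ is a genuine non-trivial access structure realised by $\M$, connected by Claim~\ref{claim:connected-to2}. Scaling $\M$ by the $\lambda$ witnessing that it is an entropic matroid gives a complexity-$1$ entropic realisation of $\A$, so $\A$ is ideal. Hypothesis a) then yields $\A^\bot$ ideal, and via Theorem~\ref{thm:b-k} there is a unique matroid $\N$ with all singleton ranks $1$ realising $\A^\bot$, some $\mu$-scaling of which is entropic. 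But $\M^\bot$ is itself a matroid realising $\A^\bot$ (Claim~\ref{claim:poly-dual-connected}~b)), and connectedness of $\M$ (through $f^\bot(i)=f(M{-}i)+1-f(M)=1$) makes every singleton of $\M^\bot$ have rank~$1$. Uniqueness in Theorem~\ref{thm:b-k} then forces $\M^\bot=\N$, so $\mu\M^\bot$ is entropic and $\M^\bot$ is an entropic matroid.

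The main obstacle I expect is the careful bookkeeping of scalings: entropic-ness is fragile under rescaling, so one must consistently distinguish the ``bare'' matroid from the particular scalar multiple that happens to be entropic, and feed each into the right statement. The subsidiary delicate point is arranging that the dual matroid satisfies the hypothesis of the uniqueness clause of Theorem~\ref{thm:b-k}, namely that every singleton has rank~$1$; this is exactly where the reduction to a connected starting matroid (and the observation that a connected matroid has no loops and no coloops) becomes essential.
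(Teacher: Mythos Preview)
Your proposal is correct and follows essentially the same route as the paper's own proof: reduce to connected objects, use Theorem~\ref{thm:b-k} to identify the (unique) matroid attached to an ideal connected structure, and invoke Claim~\ref{claim:poly-dual-connected}~b) together with the uniqueness clause to match $\M^\bot$ with the matroid arising from $\A^\bot$. The only cosmetic differences are that the paper phrases ``no coloops / singleton rank~$1$ in the dual'' as tightness (via Corollary~\ref{corr:ideal-is-tight} and Claim~\ref{claim:tight-dual}~a)), and it states the reduction to connected $\A$ and to tight connected $\M$ explicitly at the outset rather than inline.
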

\begin{proof}
Let us first make some simplifying assumptions. In a) the access structure
can be assumed to be connected: simply forget about the unimportant
participants, they will be unimportant in the dual structure. In b) the
matroid can be assumed to be tight and connected. This is so as the matroids
$\M$ and $\M\down$ are entropic at the same time: if $i$ has a non-zero
private info, then $i$ is completely independent of the rest of the matroid.
Furthermore, if $\M$ is not connected, then it is an independent sum of the
connected components, and then $\M^\bot$ is the sum of the duals of the
components.

The reduction from entropic matroids to tight entropic matroids was
discussed briefly at the end of Section~\ref{subsec:polymatroids}. In the
proof of Theorem \ref{thm:main-almost-ideal} we need a similar reduction for
almost entropic matroids which is provided by Mat\'u\v s' theorem, see
\cite{entreg}.

\smallskip
a) $\rightarrow$ b)
As remarked above, we may assume that the entropic matroid $\M$ is tight and
connected. Pick any element of its ground set and name it $s$, the remaining
elements are in $P$. Since $\M$ is connected, it has no loops, thus $f(s)=1$.
Define the access structure $\A\subset 2^P$ by
$$
    \A = \{ A\subseteq P:\, f(sA) = f(A) \}.
$$
Clearly $\M$ realizes this access structure, consequently $\A$ is ideal,
and by Claim~\ref{claim:connected-to2} it is also connected. By Claim
\ref{claim:poly-dual-connected} b), the dual matroid $\M^\bot$ realizes
$\A^\bot$.

As $\A$ is a connected ideal structure, assumption a) says that $\A^\bot$ is
ideal. Let $\M'$ be the scaled entropic polymatroid which realizes $\A^\bot$
with $f'(s)=f'(a)=1$. As $\A^\bot$ is connected by Claim
\ref{claim:access-dual-connected}, conditions of Theorem~\ref{thm:b-k}
hold. Consequently $\M'$ is the unique matroid realizing $\A^\bot$. As
$\M^\bot$ also realizes the same access structure, $\M'$ and $\M^\bot$ are
the same matroids. Now $\M'$ is a scaled version of an entropic polymatroid,
thus $\M'=\M^\bot$ is an entropic matroid, as was required.

\smallskip

b) $\rightarrow$ a)
Let $\A$ be an ideal connected access structure realized by the entropic
polymatroid $\M^*$. As $\A$ is connected and ideal, we have $f^*(i)=f^*(s)>0$ for
all participants $i\in P$. Let $\lambda=1/f^*(s)$ and $\M=\lambda\M^*$.
Then $\M$ also realizes $\A$
and $f(i)=f(s)=1$ for all $i\in P$. By Corollary~\ref{corr:ideal-is-tight}
$\M$ is tight, and  by Theorem~\ref{thm:b-k} $\M$ is a matroid. As $\A$ is
connected, by Claim~\ref{claim:connected-to1} $\M$ is connected.
Consequently $\M$ is a tight, connected, entropic matroid which realizes the
access structure $\A$. By assumption b) $\M^\bot$ is an entropic
matroid, realizes $\A^\bot$ by Claim~\ref{claim:poly-dual-connected} b); finally by
Claim~\ref{claim:tight-dual} a) $\M^\bot$ and $\M$ have the same value on
singletons. Thus $\lambda^\bot\M^\bot$ is
an entropic polymatroid for some positive $\lambda^\bot$, realizes
$\A^\bot$, and has complexity $\sigma(\M^\bot)=\sigma(\M)=1$. 
Therefore $\A^\bot$ is ideal.
\end{proof}

Almost entropic polymatroids form a closed cone, which means that positive
multiples of an aent polymatroid are aent. Consequently the definition of
almost entropic matroids does not require scaling as was the case for entropic
matroids. The matroid $\M$ is \emph{almost entropic} if it is almost entropic as a
polymatroid. Repeating the proof above
word by word while replacing ``ideal'' by ``almost ideal'' and ``entropic''
by ``almost entropic'' everywhere one gets the following theorem.

\begin{theorem}\label{thm:main-almost-ideal}
The following statements are equivalent.

\noindent\hangindent\parindent\hangafter=1
\hbox to\parindent{\hss \upshape a)}
The dual of every almost ideal access structure is almost ideal.

\noindent\hangindent\parindent\hangafter=1
\hbox to\parindent{\hss \upshape b)}
The dual of every almost entropic matroid is almost entropic.
\qed
\end{theorem}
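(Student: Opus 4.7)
The plan is to replicate the proof of Theorem~\ref{thm:main-ideal} verbatim, substituting ``almost entropic'' for ``entropic'' and ``almost ideal'' for ``ideal'' at every occurrence, as the author indicates. My task is to check that each ingredient of that proof still applies under the substitution, and to flag the few places where the substitution is not purely cosmetic.

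First I would handle direction (b) $\Rightarrow$ (a). Given an almost ideal connected $\A$ realized by an aent polymatroid $\M^*$ of complexity~$1$, I scale by $\lambda=1/f^*(s)$ to normalize $f(i)=f(s)=1$ for every $i\in P$; this uses only that the aent cone is closed under positive scaling, which is immediate since the aent cone is a closed convex cone. Then Corollary~\ref{corr:ideal-is-tight} together with Theorem~\ref{thm:b-k} forces the normalized $\M$ to be a tight, connected matroid, and being aent as a polymatroid it is an almost entropic matroid. Invoking hypothesis~(b) promotes $\M^\bot$ to an almost entropic matroid, and Claim~\ref{claim:poly-dual-connected}(b) together with Claim~\ref{claim:tight-dual}(a) show it realizes $\A^\bot$ with every singleton rank equal to that of the secret. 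Hence $\A^\bot$ is almost ideal.

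For direction (a) $\Rightarrow$ (b), I would start by reducing to tight, connected almost entropic matroids. Tightness is handled by Mat\'u\v s' Lemma~3 from~\cite{entreg} (quoted at the end of Section~\ref{subsec:polymatroids}), which guarantees that $\M\down$ is aent whenever $\M$ is; the matroid version of the paragraph there removes non-tight singletons without loss. Connectedness may be assumed because duality commutes with direct sums, so it suffices to verify (b) on each connected summand. Given such an $\M$, rename an arbitrary ground element $s$, set $\A=\{A\subseteq P:f(sA)=f(A)\}$, and observe that $\M$ realizes $\A$ with complexity~$1$, so $\A$ is almost ideal and, by Claim~\ref{claim:connected-to2}, connected. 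Hypothesis~(a) then furnishes an aent polymatroid $\M'$ realizing $\A^\bot$; after scaling so that $f'(s)=f'(a)=1$, Theorem~\ref{thm:b-k} applies and makes $\M'$ the \emph{unique} matroid realizing $\A^\bot$. Since $\M^\bot$ is another such matroid (by Claim~\ref{claim:poly-dual-connected}(b)), the two coincide, so $\M^\bot$ inherits being aent.

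The only point that deserves scrutiny is that Theorem~\ref{thm:b-k} never invoked the entropic hypothesis; it used only the numerical condition $f(a)=f(s)=1$. Once this is checked, the two almost entropic closure properties needed above (scaling, which is trivial, and tightening, which is Mat\'u\v s' theorem) make the substitution mechanical, and the proof of Theorem~\ref{thm:main-ideal} transports verbatim to yield Theorem~\ref{thm:main-almost-ideal}.
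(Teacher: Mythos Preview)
Your proposal is correct and follows exactly the approach the paper intends: the paper's own ``proof'' of Theorem~\ref{thm:main-almost-ideal} is the single sentence instructing the reader to rerun the proof of Theorem~\ref{thm:main-ideal} with ``almost'' inserted everywhere, and you have carried out precisely that rerun while flagging the two places where the substitution is not purely cosmetic---scaling (trivial, since the aent cone is a cone) and the tight reduction (handled by Mat\'u\v s' result, as the paper itself notes inside the proof of Theorem~\ref{thm:main-ideal}). Your observation that Theorem~\ref{thm:b-k} never uses the entropic hypothesis is exactly the point that makes the transport go through.
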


\section{Duals of almost entropic matroids}\label{sec:final}

We have almost all the pieces together to prove the main result:
\begin{theorem}\label{thm:main}
There is an almost ideal access structure whose dual is not almost ideal.
\end{theorem}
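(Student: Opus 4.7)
The plan is to pivot immediately through Theorem~\ref{thm:main-almost-ideal}: to exhibit an almost ideal access structure whose dual is not almost ideal, it suffices to exhibit a tight connected matroid $\M$ which is almost entropic but whose dual $\M^\bot$ is not almost entropic. Given such an $\M$, I would choose any element of the ground set to play the role of the secret $s$ and, following the proof of Theorem~\ref{thm:main-ideal}, set $\A = \{A\subseteq P : f(sA)=f(A)\}$; the matroid $\M$ realizes this $\A$ and, by the equivalence just quoted, $\A$ is almost ideal while $\A^\bot$ is not. So the work is entirely on the matroid side.

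To construct such an $\M$ I would invoke the ingredients supplied by T.~Kaced \cite{tarik}. The strategy is to first produce a small integer polymatroid $\M_0$ whose rank function is almost entropic but whose dual already fails to be almost entropic, and then convert $\M_0$ into a genuine matroid on $174$ elements by a sequence of splittings. Two facts make this conversion painless. First, Lemma~\ref{lemma:split1} preserves the aent property in both directions, so the resulting matroid $\M$ is aent if and only if $\M_0$ is aent. Second, by Lemma~\ref{lemma:splitting-dual} and part c) of Theorem~\ref{thm:helgason} the matroid dual $\M^\bot$ is, up to the same chain of splits, the dual polymatroid $\M_0^\bot$, so $\M^\bot$ is aent if and only if $\M_0^\bot$ is aent. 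The integer $174$ drops out of counting how many splits are needed to atomize the ranks of the singletons of $\M_0$ into $0/1$ values.

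What remains is verifying the two asymmetric properties of $\M_0$. To show $\M_0$ is aent I would exhibit a family of entropic polymatroids (most naturally arising from linear constructions over varying characteristics, possibly composed with a limiting perturbation) whose rank functions converge to that of $\M_0$. To show $\M_0^\bot$ is \emph{not} aent I would use the MMRV entropy inequality, whose derivation is given in the Appendix. Since every aent polymatroid satisfies every linear information inequality, it is enough to locate a quadruple (or a small family of quadruples) of elements of the ground set of $\M_0^\bot$ on which the MMRV expression, evaluated on the rank function of $\M_0^\bot$, is strictly negative. This is a finite combinatorial check.

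The main obstacle is the asymmetric design of $\M_0$ itself. All the machinery of Section~\ref{subsec:factors} — splitting, tightening, Helgason-style matroid extension — is symmetric with respect to duality, so the asymmetry between aentness of $\M_0$ and non-aentness of $\M_0^\bot$ must be engineered at the very first step. Producing such a seed polymatroid is exactly Kaced's contribution, and both parts of its certification (the entropic approximation of $\M_0$ on one side, the MMRV violation by $\M_0^\bot$ on the other) rely on delicate, case-specific arguments that cannot be derived from general principles alone.
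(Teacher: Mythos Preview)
Your architecture is exactly the paper's: reduce via Theorem~\ref{thm:main-almost-ideal} to exhibiting an aent matroid with non-aent dual, take Kaced's seed integer polymatroid on five elements, and blow it up to a matroid by the splitting of Theorem~\ref{thm:helgason}, with Lemmas~\ref{lemma:split1} and~\ref{lemma:splitting-dual} carrying aent-ness and duality through the splits. Two small corrections on the details you speculate about. First, the certification that the seed is almost entropic is \emph{not} obtained from linear constructions over varying characteristics; Kaced's polymatroid $\M_\xi$ is directly \emph{entropic}, coming from an explicit joint distribution on five binary variables, and the integer polymatroid is then produced either by a continuity-plus-rounding argument (interior of the aent cone is entropic) or, concretely, by adding carefully chosen positive multiples of the rank-one entropic polymatroids $\rr_A$ to a scalar multiple of $\M_\xi$ so that all coordinates land on integers --- this is Claim~\ref{claim:tarik}. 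Second, the \MMRV{} inequality is a five-variable inequality, not a four-variable one, so the violation check on the dual is on the full five-element ground set, not on a quadruple.
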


By Theorem~\ref{thm:main-almost-ideal} we need to exhibit an almost entropic
matroid whose dual is not almost entropic. The existence of such a matroid
was proved by Tarik Kaced~\cite[Theorem~2]{tarik}, this section is a
detailed account of that result. The proof starts with the construction of
an entropic polymatroid whose dual is not entropic. Using a continuity
argument and linear scaling, one gets an integer polymatroid with the same
properties. Theorem~\ref{thm:helgason} established a connection between
integer polymatroids and matroids which preserves duality and almost
entropicity. To complete the tour apply this theorem to get the required
matroid. Now let us see the details.

\smallskip

Finding an entropic polymatroid whose dual is not entropic was a
long-standing open problem. The example below is due to Kaced \cite{tarik}.
The polymatroid is specified by a distribution on five binary random
variables. To show that its dual is not entropic, Kaced used a 5-variable
non-Shannon type information inequality, see \cite{MMRV,fmadhe}. Such an
inequality is a closed halfspace in the $(2^{|M|}-1)$-dimensional space which a)
contains all entropic points on its non-negative side (consequently all aent
points as well), and b) cuts into the polymatroid cone $\GM$. Entropy
inequalities are typically written using
abbreviations originating in information theory. For disjoint subsets $A$,
$B$, $C$ we write 
\begin{align*}
    h(A|B) &= h(AB) - h(B), \\[2pt]
    h(A,B) &= h(A)+f(B)-h(AB), \\[2pt]
    h(A,B\,|\,C) &= h(AC)+h(BC)-h(ABC)-h(C),
\end{align*}
corresponding to conditional entropy, mutual information, and conditional
mutual information, respectively. In any polymatroid these expressions are
always non-negative. The \MMRV{} inequality written for
the singletons of the five-element set $\{abcde\}$ is
\begin{align}\label{eq:mmrv}
  & \big( h(a,b|c)+h(b,c|a)+h(c,a|b)\big) +{} \\
  & + \big(h(b,c|d)+h(b,c|e)+h(d,e)-h(b,c)\big) \ge 0.\nonumber
\end{align}
For a short proof that this inequality holds for aent polymatroids see the
Appendix. 

\begin{claim}\label{claim:tarik}
There is a tight, integer and  aent polymatroid whose dual is not aent.
\end{claim}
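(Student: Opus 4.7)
The plan is to follow the roadmap sketched in the two paragraphs preceding the claim: exhibit Kaced's explicit entropic polymatroid on five binary variables whose dual fails to be aent, and then upgrade it so that it is tight and integer while preserving this failure. Concretely, let $\pi$ be Kaced's joint distribution on binary variables indexed by $\{a,b,c,d,e\}$ (as specified in~\cite{tarik}), and set $\M=\M_\pi$ with rank function $f(A)=\H(\xi_A)$; by construction $\M$ is entropic, hence aent. Compute the dual $\M^\bot$ from the formula $f^\bot(A)=f(M{-}A)+\mu(A)-f(M)$. The central computation is to substitute $f^\bot$ into the six-term MMRV inequality~(\ref{eq:mmrv}) and verify that the value is strictly negative. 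Since MMRV holds on the entire aent cone (proved in the Appendix), a strict violation places $\M^\bot$ outside this closed cone, so $\M^\bot$ is not aent.

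It then remains to upgrade $\M$ to be tight and integer without spoiling what was just established. For tightness I replace $\M$ by $\M\down$: Matúš's Lemma~3 of~\cite{entreg} gives that $\M\down$ is aent, and since duals are always tight, Claim~\ref{claim:tight-dual}~b) together with the fact that $\M^\bot$ is tight yields $(\M\down)^\bot=\M^\bot{}^\bot{}^\bot=\M^\bot$, so the dual is unchanged and still violates MMRV. For integrality I would argue by continuity and density. The aent cone is closed and full-dimensional with interior consisting of entropic polymatroids; the tightness equations $f(M)=f(M{-}i)$ cut out a rational linear subspace; and strict violation of MMRV on the dual is an open halfspace condition. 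Therefore there is a tight rational aent polymatroid $\M'$ arbitrarily close to $\M\down$ whose dual still strictly violates MMRV, and clearing denominators scales $\M'$ to a tight, integer, aent polymatroid with the required non-aent dual.

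The hard part is the MMRV calculation itself: writing down all $31$ subset entropies of Kaced's joint distribution, forming the complementary ranks $f^\bot$ by the dual formula, and verifying a strictly negative total for the sum of six conditional mutual informations. This step is mechanical but long; everything else is pure continuity, rational approximation, and the appeal to Matúš's tightening theorem.
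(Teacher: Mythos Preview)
Your overall plan matches the paper's: start from Kaced's explicit five-variable distribution, verify that its dual strictly violates the \MMRV{} inequality (hence is not aent), and then upgrade by a continuity/density argument plus tightening. The difference is the order of the last two steps, and that is where your argument has a gap.

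You tighten first, obtaining $\M\down$, and then assert that a rational tight aent polymatroid exists arbitrarily close to it. But the facts you list---that the aent cone is full-dimensional and that the tightness equations cut out a rational linear subspace---do not by themselves yield this. Every tight polymatroid sits on the boundary of the aent cone (the constraints $f(M)=f(M{-}i)$ are active Shannon inequalities), so the intersection $(\mbox{aent})\cap(\mbox{tight})$ is a boundary set; nothing you wrote rules out that this intersection is lower-dimensional near $\M\down$, or that $\M\down$ is an extreme point approachable only along irrational directions. The ``Therefore'' needs an extra ingredient: for instance, that the tightening map is a rational linear surjection taking aent to aent, so that rational aent points near $\M_\xi$ (which do exist, after perturbing into the interior) map to rational tight aent points near $\M\down$.

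The paper sidesteps this by reversing the order: first perturb $\M_\xi$ into the \emph{interior} of the aent cone (where every point is entropic and rational points are trivially dense because the set is open), pick a nearby rational $\M''$, scale by the common denominator $n$ to get an integer entropic polymatroid $n\M''$ (realized by $n$ independent copies), and only \emph{then} tighten. Tightening subtracts integers from integers, preserves almost-entropicity by Mat\'u\v s' lemma, and by Claim~\ref{claim:tight-dual}~b) leaves the dual---and hence the \MMRV{} violation---unchanged. With this ordering the density step is immediate. The paper also goes further and writes down an explicit integer example as a positive combination of $\M_\xi$ and the rank-one polymatroids $\rr_A$, yielding the concrete values in Table~\ref{table:2}.
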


\begin{proof}
The distribution on five random variables $\xi_a,\dots,\xi_e$ is
specified in Table~\ref{table:1}. Each of the variables takes either zero or
one, there are only eight combinations with positive probability. The
associated polymatroid $\M_\xi$ is entropic, the left hand side of
(\ref{eq:mmrv}) evaluates to 0.108494. The dual $\M^\bot_\xi$ is not aent as
the left hand side of (\ref{eq:mmrv}) is -0.0715364.

\begin{table}[htb]%
\begin{center}\begin{tabular}{c|c|c|c|c|c}
 $\xi_a$ & $\xi_b$ & $\xi_c$ & $\xi_d$ & $\xi_e$ & Prob \\
\hline
 \rule{0pt}{12pt}0  &  0  &  0  &  0  &  0  & 0.077 \\
 0  &  0  &  1  &  1  &  0  & 0.182 \\
 0  &  1  &  0  &  0  &  1  & 0.182 \\
 0  &  1  &  1  &  0  &  0  & 0.077 \\
 1  &  0  &  0  &  0  &  0  & 0.105 \\
 1  &  0  &  1  &  0  &  0  & 0.136 \\
 1  &  1  &  0  &  0  &  0  & 0.136 \\
 1  &  1  &  1  &  0  &  0  & 0.105 \\
\end{tabular}\end{center}

\kern -12pt

\caption{Distribution on five variables}\label{table:1}
\end{table}%

The duality operation is continuous, thus duals of the polymatroids in a
small neighborhood of $\M_\xi$ are still violating the \MMRV{} inequality.
The entropic polymatroid $\M_\xi$ is on the boundary of the aent cone (for
example, $d$ and $e$ have no private info), but there is another polymatroid
$\M'$ arbitrarily close to $\M_\xi$ \emph{inside} that cone. By
\cite{fmtwocon} interior points of the aent cone are entropic. Take $\M''$
very close to $\M'$ such that all coordinates rational. Let $n$ be the
smallest common denominator of the fractions in the coordinates. Coordinates
in $n\M''$ are integer. The dual of $n\M''$ violates the \MMRV{} as the dual
of $\M''$ violates it, and the left hand side of (\ref{eq:mmrv}) also
multiplies by $n$. Finally, $n\M''$ is entropic: to realize it take $n$
independent copies of the random variables realizing $\M''$. The
tight part of $n\M''$ is integer and almost entropic, its dual is the same
as the dual of $n\M''$ by Claim~\ref{claim:tight-dual} b), proving the
claim.

\smallskip
Using the distribution $\xi$ above, such an integer polymatroid can be constructed
directly. For a subset $A\subseteq M$ define the the polymatroid $\rr_A$ as
$$
    \rr_A: I\mapsto \begin{cases} 1 & \mbox{ if $A\cap I\not=\emptyset$ },\\
                                  0 & \mbox{ otherwise }.
                    \end{cases}
$$
Clearly, $\lambda\rr_A$ is entropic for every positive $\lambda$.
As $A$ runs over all non-empty subsets of $M$ these polymatroids
are linearly independent and span a full-dimensional subcone
of $\GM^*$ consisting of entropic polymatroids only \cite{fmtwocon}. The
idea is that take a multiple of $\M_\xi$ (which is almost entropic),
and use some linear combination of $\rr_A$'s to round up the coordinates 
to integer values. This idea works. The polymatroid
\begin{align*}
 \M& = 
  50.03\M_\xi +  0.3819594(\rr_{abd}+\rr_{acd}+\rr_{abe}+\rr_{ace})+{} \\
 &0.1741526(\rr_b+\rr_c) + 0.0067674\rr_{bc}+ 0.5112645\rr_{abc}+{} \\
 &0.6235703(\rr_{bd}+\rr_{cd}+\rr_{be}+\rr_{ce})+0.0270848(\rr_{bcd}+\rr_{bce})+{} \\
 & 0.1012390\rr_{a}+0.4887355(\rr_{ab}+\rr_{ac})+0.3314441(\rr_{ad}+\rr_{ae})+ {} \\
 &0.3356126(\rr_{abcd}+\rr_{abce}) +
 0.4877698\rr_{bcde}+0.5648560\rr_{abcde}
\end{align*}
is clearly entropic, and it is integer. This is so as the coefficients in
this formula are the solutions of a system of linear equations yielding exact values.
Table~\ref{table:2} shows the coordinates of $51\M_\xi$ (left column), the integer
entropic polymatroid $\M$ (middle column), and the tightening of $\M$ (right
column). The value of the \MMRV{} inequality for the dual of $\M$ is $-1$,
thus $\M^\bot$ is not almost entropic. As $\M^\bot = (\M\down)^\bot$, the
tight part of $\M$ is a tight, integer, aent polymatroid whose dual is not
aent.
\end{proof}

\begin{table}[htb]%
\begin{center}
\begin{tabular}[t]{|rr|r|l|}
\hline
\rule{0pt}{12pt}$a$    &    49.983219  &   55   &   37\\
$b$    &    50.030000  &   55   &   31\\
$c$    &    50.030000  &   55   &   31\\
$d$    &    34.242173  &   38   &   38\\
$e$    &    34.242173  &   38   &   38\\[4pt]
$ab$   &   100.013219  &  107   &   65\\[1pt]
$ac$   &   100.013219  &  107   &   65\\[1pt]
$ad$   &    74.221223  &   81   &   63\\[1pt]
$ae$   &    74.221223  &   81   &   63\\[1pt]
$bc$   &    97.356052  &  105   &   57\\[1pt]
$bd$   &    73.693026  &   80   &   56\\[1pt]
$be$   &    73.693026  &   80   &   56\\[1pt]
$cd$   &    73.693026  &   80   &   56\\[1pt]
$ce$   &    73.693026  &   80   &   56\\[1pt]
$de$   &    65.536972  &   72   &   72\\[1pt]
\hline
\end{tabular}
\qquad
\begin{tabular}[t]{|rr|r|l|}
\hline
\rule{0pt}{12pt}$abc$  &  146.591925  &  155   &   89\\
$abd$  &   111.389648  &  119   &   77\\
$acd$  &   111.389648  &  119   &   77\\
$abe$  &   111.389648  &  119   &   77\\
$ace$  &   111.389648  &  119   &   77\\
$ade$  &    90.946998  &   99   &   81\\
$bde$  &    97.356052  &  105   &   81\\
$cde$  &    97.356052  &  105   &   81\\
$bcd$  &   113.024608  &  121   &   73\\
$bce$  &   113.024608  &  121   &   73\\[2pt]
$abcd$ &   146.591925  &  155   &   89\\
$abce$ &   146.591925  &  155   &   89\\
$abde$ &   122.766078  &  131   &   89\\
$acde$ &   122.766078  &  131   &   89\\
$bcde$ &   128.693164  &  137   &   89\\
$abcde$&   146.591925  &  155   &   89\\
\hline
\end{tabular}
\end{center}

\kern -12pt

\caption{An integer entropic polymatroid}\label{table:2}
\end{table}

Let the tight integer polymatroid provided by Claim~\ref{claim:tarik}
be $\N$, and consider the matroid $\phi(\N)$
provided by Theorem~\ref{thm:helgason}. As $\N$ is aent, $\phi(\N)$ is
almost entropic; $\N^\bot$ is not aent, thus $\phi(\N^\bot)$ is not
aent. Consequently the matroid $\phi(\N)$ is aent and its dual, 
$\phi(\N^\bot)$, is not aent either -- completing the proof of 
Theorem~\ref{thm:main}.

\smallskip
Using the tight almost-entropic polymatroid of Table~\ref{table:2} the
construction in the proof of Theorem~\ref{thm:main-ideal} gives
an almost-ideal access structure on 174 participants (as the corresponding
aent matroid has $f(a)+f(b)+f(c)+f(d)+f(e)=175$ atoms, one of them is the
secret, others are the participants) whose dual is not almost-ideal. It is
left to the interested reader to describe the qualified subsets for different
choices of the secret.

\medskip

According to Theorem~\ref{thm:main-almost-ideal}, to construct a
counterexample to Conjecture~\ref{conj:dual-of-ideal} we need an entropic
matroid whose dual is not entropic. Entropic matroids (and their multiples)
are always on the boundary of the aent cone; the boundary has an intricate
and complicated structure. There seems to be no other way to show that a
matroid is entropic than giving the probability distribution explicitly. But
it is not clear how to guarantee $\H(\xi_A)/\H(\xi_s)$ to be an integer.
No entropic matroid is known which is not a multiple of a linearly
representable polymatroid. Finding such a matroid would be very interesting.

\section{Conclusion and open problems}\label{sec:conclusion}

An almost ideal secret sharing scheme on 174 participants was constructed
explicitly whose dual structure is not almost ideal. It was done by putting
together several pieces of earlier works. The intricate connection between
ideal secret sharing schemes and matroids was observed by Brickell and
Davenport \cite{brickell-davenport}. This connection is expressed in Theorem
\ref{thm:b-k} extending a result of Blakley and Kabatiansky
\cite{blakley-kabatianski}; the presented proof is an adoption of the one
from \cite{fmquantoids}. Entropic and almost entropic polymatroids have been
studied intensively by Frantisek Mat\'u\v s
\cite{M.back,fmadhe,fmtwocon,fminf,entreg}; his insight of the structure of
the entropic region was indispensable to this paper. And, of course, we were
using the surprising result of Tarik Kaced \cite{tarik} who settled an old
conjecture by constructing an entropic polymatroid whose dual is not
entropic. These results allowed us to solve the duality problem of ideal
secret sharing schemes -- is the dual of an ideal structure is ideal? -- in
a model slightly differing from the standard one, namely secret recovery and
secret independence is required only ``up to a negligible factor''. This
model has been studied earlier under the name of ``probabilistic secret
sharing'', see \cite{prob-secret,tarik-2,compartm}. The original problem remained
unsolved.

\begin{problem}
Is the dual of an ideal structure is ideal?
\end{problem}

\noindent
A substantial obstacle attacking this problem was mentioned at the end of
the previous section: every known entropic matroid is linearly representable.

\begin{problem}
Find an entropic matroid which is not linear.
\end{problem}

\noindent
A promising approach seems to be using subgroup representation for the
matroid \cite{chan-yeung}. It requires substantial knowledge of the subgroup
structure of non-commutative finite groups.

\smallskip

It is an interesting question how restrictive the probabilistic model is.
If an access structure has complexity 1, then it is almost ideal 
(while not necessarily ideal \cite{beimel-livne}); this follows from the fact
that the aent polymatroids form a closed cone \cite{fmtwocon}.
The following problem asks about the converse of this implication.

\begin{problem}
Does every almost ideal access structure have complexity 1?
\end{problem}

\noindent
The question is equivalent to whether $\bar\sigma(\A)=1$ implies $\sigma(\A)=1$.
As mentioned at the end of Section \ref{subsec:complexity}, $\bar\sigma$ and
$\sigma$ are not known to be separated.

\begin{problem}
Is there an access structure $\A$ with $\bar\sigma(\A)$ strictly smaller
than $\sigma(\A)$?
\end{problem}

\noindent
This problem was raised in Section \ref{subsec:complexity}:
can the imperfections allowed by the probabilistic model be patched by
adding some small entropy to the secret and / or shares? This question has
been considered by Kaced in \cite{tarik-2}, but remained unsolved.

\section*{Acknowledgment}
The author would like to thank the encouragement and fruitful discussion
at the Prague Stochastics Conference, 2019. He is particularly indebted to
Oriol Farr\'as and Carles Padr\'o. Special thanks to Gabor 
Tardos for figuring out the subtleties of  Claim~\ref{claim:connected-to2}.

The research reported in this paper was supported by GACR project
number 19-04579S, and partially by the Lend\"ulet program of the HAS.

\section*{Appendix}

\begin{theorem}\label{app:them-1}
A matroid is connected if and only if any two points can be connected by a
circuit.
\end{theorem}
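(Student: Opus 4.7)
The plan is to prove each direction separately, with the ($\Leftarrow$) direction being essentially immediate and the ($\Rightarrow$) direction requiring an equivalence-relation argument.

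For ($\Leftarrow$), I would assume $\M=(h,M)$ is not connected, witnessed by a partition $M = A \cup B$ with $A, B$ non-empty and $h(A) + h(B) = h(M)$. Every subset of $A$ is then independent from every subset of $B$. If a circuit $C$ met both $A$ and $B$, both $C \cap A$ and $C \cap B$ would be proper subsets of $C$, hence independent, so their mutual independence would yield $h(C) = |C \cap A| + |C \cap B| = |C|$, contradicting the dependence of $C$. Hence every circuit lies inside $A$ or inside $B$, and picking $x \in A$, $y \in B$ exhibits a pair not sharing any circuit.

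For ($\Rightarrow$), I would define the relation $x \sim y$ on $M$ by: $x = y$, or some circuit contains both $x$ and $y$. Reflexivity and symmetry are clear, so the crux is transitivity. Given distinct $x, y, z$ with circuits $C_1 \ni x, y$ and $C_2 \ni y, z$, I choose such a pair $(C_1, C_2)$ minimizing $|C_1 \cup C_2|$. If $z \in C_1$ or $x \in C_2$ I am done at once, so assume $C_1 \neq C_2$ with $x \in C_1 - C_2$ and $z \in C_2 - C_1$. Applying the strong circuit elimination axiom at $y \in C_1 \cap C_2$ with the distinguished element $x \in C_1 - C_2$ produces a circuit $C_3$ with $x \in C_3 \subseteq (C_1 \cup C_2) - y$. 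Since $C_3$ is itself a circuit, it cannot be a proper subset of the circuit $C_1$, so $C_3 \cap (C_2 - C_1) \neq \emptyset$; pick $y' \in C_3 \cap C_2$. If $y' = z$, then $x, z \in C_3$ and we are done; otherwise $(C_3, C_2)$ is a witnessing pair for $x \sim y' \sim z$ with $|C_3 \cup C_2| \le |C_1 \cup C_2| - 1$, contradicting minimality.

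Once $\sim$ is an equivalence relation, let $C(x)$ be the equivalence class of $x$. No circuit can straddle $C(x)$ and its complement, since two elements on opposite sides of the partition would then share a circuit, contradicting the definition of equivalence class. Consequently, if $B_1, B_2$ are bases of the matroid restrictions to $C(x)$ and $M - C(x)$, their union contains no circuit, hence is independent, and clearly spans $M$, so it is a basis of $\M$. This gives $h(M) = |B_1| + |B_2| = h(C(x)) + h(M - C(x))$, meaning $C(x)$ and $M - C(x)$ are independent in the polymatroid sense. Connectedness of $\M$ therefore forces $C(x) = M$ (as $x \in C(x) \neq \emptyset$), so every element of $M - x$ shares a circuit with $x$, and transitivity of $\sim$ produces a common circuit for any two distinct points of $M$.

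The main obstacle is the transitivity step, which hinges on the strong circuit elimination axiom combined with the minimality argument on $|C_1 \cup C_2|$; everything else reduces to routine bookkeeping about how circuits interact with independent partitions.
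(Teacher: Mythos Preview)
Your proof is correct and follows essentially the same route as the paper: both show that ``sharing a circuit'' is an equivalence relation via (strong) circuit elimination, then argue that a nontrivial equivalence class would yield an independent bipartition of the ground set. The only differences are cosmetic---the paper derives strong circuit elimination from scratch out of the rank-function definition rather than invoking it as an axiom, and your final step phrases the rank calculation in terms of bases of the two restrictions where the paper works directly with maximal independent subsets.
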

\begin{proof}
Let $\M=(h,M)$ be a matroid. Using matroid terminology, $A$ is
\emph{dependent} if $h(A)<|A|$, and $A$
is a \emph{circuit} if it is a minimal dependent set. Every dependent
set contains a circuit. Points $x$ and $y$ are \emph{connected}, written as
$a\approx b$, if there is a circuit containing both of them. First we prove
that $\approx$ is an equivalence relation: if $x\approx z$ and $z\approx y$
then $x\approx y$. This is done in three steps. In claims a), b), c), $C_1$ 
and $C_2$ are different circuits.

\smallskip

a) \emph{Suppose $z\in C_1\cap C_2$. There is a 
circuit $E\subseteq C_1\cup C_2$ which avoids $z$}
(exchange property of circuits): 

\noindent\emph{Proof.}
As $C_1$, $C_2$ are different circuits, $h(C_i)=|C_i|-1$ and
$h(C_1\cap C_2)=|C_1\cap C_2|$ as $C_1\cap C_2$ is a proper subset of a
circuit.  Using submodularity for $C_1$ and $C_2$,
\begin{align*}
   h(C_1\cup C_2) &\le h(C_1)+h(C_2)-h(C_1\cap C_2) \\
      &= |C_1|+|C_2|-h(C_1\cap C_2) -2  \\
      &= |C_1|+|C_2|-|C_1\cap C_2|-2 \\
      &= |C_1\cup C_2|-2.
\end{align*}
Consequently $h(C_1C_2{-}z) \le |C_1C_2{-}z|-1$, which means that
$C_1C_2{-}z$ is dependent, thus contains a circuit.

b) \emph{Let $x\in C_1{-}C_2$, and $z\in C_1\cap C_2$. There is a circuit in
$C_1\cup C_2$ which contains $x$ and avoids $z$.}

\noindent\emph{Proof.} 
By induction on $|C_1\cup C_2|$. By a) there is a circuit $E\subset C_1\cup
C_2$ which avoids $z$. Then $E\cap (C_2{-}C_1)$ is not empty, as $E$ is
dependent 
while $E\cap C_1$, as a proper subset of $C_1$, is independent. If $x\in E$
then we are done. If $x\notin E$, then pick $z'\in E\cap(C_2{-}C_1)$. By a)
there is circuit $F\subset E\cup C_2$ which avoids $z'$. Use induction
on $C_1$ and $F$.

c) \emph{Let $x\in C_1{-}C_2$, $y\in C_2{-}C_1$, and $C_1\cap C_2\not=
\emptyset$. There is a circuit $E\subseteq C_1\cup C_2$ which contains 
$x$ and $y$.}

\noindent\emph{Proof.}
By induction on $|C_1\cup C_2|$. Let $z\in C_1\cap C_2$. 
By b) there is a circuit $E\subset C_1\cup C_2$ which contains
$x$ and avoids $z$. If $y\in E$, then we
are done. If $y\notin E$, then pick $z'\in E\cap(C_2{-}C_1)$. By b) there is
a circuit $F\subset E\cup C_2$ such that $y\in F$ and $z'\notin F$. 
Use induction on $C_1$ and $F$.

\smallskip

This proves that $\approx$ is an equivalence relation. Any two points of the
matroid
are connected by a circuit if and only if there is only a single equivalence
class for $\approx$. First assume that the matroid is connected, and by contradiction that
$A$ is a proper equivalence class of $\approx$. Consider the partition
$A\cup B$ where $B$ is the complement of $A$. Choose the independent
sets $A'\subseteq A$ and $B'\subseteq B$ such that $h(A)=|A'|$ and $h(B)=|B'|$.
As $A$ and $B$ are not independent, 
$h(A'\cup B')\le h(A\cup B)<h(A)+h(B)=h(A')+h(B')=|A'|+|B'|$, thus
$A'\cup B'$ contains a circuit $E$. But $E$ must intersect both $A'$ and
$B'$ (as $A'$ and $B'$ are independent), contradicting that elements from
$A'\subseteq A$ and from $B'\subseteq B$ are not connected.

Conversely, if the matroid is not connected, say the elements of the partition $A\cup B$
are independent, then no circuit can intersect both $A$ and $B$. Indeed,
first the independence of $A$ and $B$ implies $h(E)=h(E\cap A)+h(E\cap B)$
for all subsets $E\subseteq M$.
Second, assume the circuit $E$ intersects both $A$ and $B$. Then
$E\cap A$ and $E\cap B$ are independent (as proper subsets of $E$), and then
$$
  h(E)=h(E\cap A)+h(E\cap B)=|E\cap A|+|E\cap B|=|E|,
$$
a contradicting that $E$ is dependent.
\end{proof}

\begin{theorem}
If $\xi=\langle\xi_a,\dots,\xi_e\rangle$ is a distribution on five elements,
then the polymatroid $\M_\xi$ satisfies the \MMRV{} inequality
\begin{align}\label{eq:rrmv-2}
  & \big( h(a,b|c)+h(b,c|a)+h(c,a|b)\big) +{} \\
  & + \big(h(b,c|d)+h(b,c|e)+h(d,e)-h(b,c)\big) \ge 0,\nonumber
\end{align}
written as $\MMRV(\M_\xi)\ge 0$.
\end{theorem}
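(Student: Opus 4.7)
The plan is to prove \eqref{eq:rrmv-2} via the Zhang--Yeung copy lemma, since \eqref{eq:rrmv-2} is a non-Shannon-type inequality and therefore cannot follow from submodularity alone.

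First I would reduce to the entropic case. The left-hand side of \eqref{eq:rrmv-2} is a linear function of the rank coordinates, so the inequality defines a closed half-space; since the aent cone is the closure of the entropic cone, it suffices to verify \eqref{eq:rrmv-2} for $\M_\xi$ with $\xi$ a genuine joint distribution of five random variables.

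Next, extend $\xi$ by the copy lemma. A natural choice, dictated by the symmetries of \eqref{eq:rrmv-2} (invariance under $d \leftrightarrow e$, and the distinguished role of the pair $bc$), is to introduce an auxiliary pair $(\xi_{b'}, \xi_{c'})$ that is a conditional copy of $(\xi_b, \xi_c)$ over $\xi_a$:
\[
  (\xi_{b'}, \xi_{c'}, \xi_a) \sim (\xi_b, \xi_c, \xi_a), \qquad
  (\xi_{b'}, \xi_{c'}) \perp (\xi_b, \xi_c, \xi_d, \xi_e) \bigm| \xi_a.
\]
Such an extension always exists by a standard construction; the resulting seven-variable distribution is entropic, and the copy conditions impose the exact entropy identities $h(a b' c') = h(a b c)$ and $I(\xi_{b'}\xi_{c'};\xi_b\xi_c\xi_d\xi_e\mid\xi_a) = 0$.

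Finally, express \eqref{eq:rrmv-2} as a non-negative linear combination of Shannon-type submodularity inequalities on subsets of the seven-variable ground set together with the copy identities, and then project away $\xi_{b'}, \xi_{c'}$. The main obstacle is finding this explicit combination: it is a finite linear-programming task whose feasibility is precisely the statement that \eqref{eq:rrmv-2} holds for entropic polymatroids, so in principle one must guess the right coefficients. The $d \leftrightarrow e$ symmetry of \eqref{eq:rrmv-2}, together with the way the copy identities couple $(\xi_b, \xi_c)$ to $\xi_a$, reduces the search to a handful of symmetric classes of submodularity terms, and the resulting verification fits on a single page of algebra.
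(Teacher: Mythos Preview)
Your copy-lemma plan is the standard route and could in principle be made to work, but as written it is not a proof: the ``single page of algebra'' is the entire content, and you have not supplied it. Two specific points deserve scrutiny. First, the assertion that feasibility of the LP over one copy step is ``precisely'' equivalent to \eqref{eq:rrmv-2} holding entropically is too strong---a single copy is sufficient but not in general necessary---so you cannot merely invoke existence of a certificate. Second, your particular copy (of $bc$ over $a$, conditionally independent of $bcde$) looks mismatched to the structure of \eqref{eq:rrmv-2}: the first bracket depends only on the $abc$-marginal and the second only on the $bcde$-marginal, which points to the relevant auxiliary independence being $a\perp de\mid bc$, i.e.\ a gluing over $bc$ rather than over $a$. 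You would need to actually run the LP to see whether your choice succeeds.

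The paper exploits exactly that marginal decomposition and bypasses the copy bookkeeping. It first verifies, by an explicit expansion into ten non-negative Shannon terms, that
\[
\MMRV(\M)+3\,h(a,de\mid bc)\ge 0
\]
holds in \emph{every} polymatroid. Then, since $\MMRV$ depends only on the $abc$- and $bcde$-marginals, it replaces $\xi$ by the maximum-entropy distribution $\eta^*$ having those same two marginals; this $\eta^*$ is the Markov coupling over $bc$, for which $h^*(a,de\mid bc)=0$, and hence $\MMRV(\M_\xi)=\MMRV(\M_{\eta^*})\ge 0$. This furnishes the explicit certificate your outline defers, and does so with the natural extension rather than a guessed one.
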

\begin{proof}
Observe first that in \emph{any} polymatroid $\M=(h,M)$ the inequality
$$
       \MMRV(\M)+3h(a,de|bc)\ge 0
$$
always holds. This is so as expanding
$\MMRV(\M)+3h(a,de|bc)$ as a linear combination of rank values, and expanding
the clearly non-negative sum below, the results are the same:
\begin{align*}
 &  h(a,d|b)+h(a,d|c)+h(a,e|b)+h(a,e|c)+{} \\
 & + h(b,c|ad) +h(b,c|ae)+h(a,bc|de) +{} \\
 & +h(d,e|a)+ h(a,e|bcd) + h(a,d|bce).
\end{align*}

The \MMRV{} inequality (\ref{eq:rrmv-2}) has been grouped into two parts.
The first part depends only on ranks of subsets of $abc$, and the second
part depends only on subsets of $bcde$. In other words, the value of the
first (and second) part depends only on the marginal distribution
$\xi_{abc}$ and $\xi_{bcde}$, respectively. $\Sigma$ denotes the collection
of all distributions 
$\eta=\langle\eta_a,\dots,\eta_e\rangle$ where each of these five variables
takes the same values as the corresponding variable does in $\xi$ but with 
arbitrary joint probability. Consider the optimization problem of maximizing
the entropy of $\eta\in\Sigma$ under the constraints that certain marginal
distributions are fixed:
$$
  {\textstyle\max_\eta}\,\{  \H(\eta) : \eta\in\Sigma, \eta_{abc}=\xi_{abc},
           \eta_{bcde}=\xi_{bcde} \}.
$$
As $\H(\eta)$ is a strictly convex function of the probabilities, this is a convex
optimization problem with linear constraints, consequently it has a single unique
optimal solution $\eta^*\in\Sigma$. Considering the distribution with the maximal entropy
is often referred to as the 
\emph{maximum entropy principle}. As the marginals on $abc$ and $bcde$ of
$\xi$ and $\eta^*$ are the same, $\MMRV(\M_\xi) = \MMRV(\M_{\eta^*})$. The
extremal distribution $\eta^*$ has the additional property that
$\eta^*_{a}$ and $\eta^*_{de}$ are independent given $\eta^*_{bc}$. This is
so, as fixing the value of $\eta^*_{bc}$, one can redefine the
distribution while keeping the probabilities on $abc$ and on $bcde$ fixed
such that $a$ and $de$ becomes independent.
This would increase the total entropy, thus $a$ and $de$ must be independent
-- giving the claimed conditional independence. Consequently the polymatroid
$\M_{\eta^*}$ satisfies additionally $h^*(a,de|bc) = 0$, and then
$\MMRV(\M_{\eta^*})\ge 0$, proving the theorem.
\end{proof}

\thebibliography{99}

\bibitem{beimel-survey}
A.~Beimel (2011),
Secret-sharing schemes: a survey,
\newblock in: \emph{IWCC 2011, volume 6639 of
LNCS}, Springer, 2011, pp 11-46

\bibitem{beimel-livne}
A.~Beimel, N.~Livne (2006)
On matroids and non-ideal secret sharing
\newblock In: Halevi S., Rabin T. (eds) \emph{Theory of Cryptography, volume
3876 of LNCS}, Springer, Berlin, Heidelberg pp 482-501

\bibitem{blakley-kabatianski}
G.~Blakley, G.~Kabatianski (1995), 
On general perfect secret sharing schemes,
in: \emph{LNCS 963, Advances in Cryptology, Proceedings of Crypto’95}, Springer
1995, pp. 367–371

\bibitem{brickell-davenport}
E.~F.~Brickell, D.~M.~Davenport (1991)
On the classification of ideal secret sharing schemes,
\emph{J. of Cryptology}, vol 4 (73) pp 123-134

\bibitem{prob-secret}
P.~D'Arco, R.~De Prisco. A.~De Santis,
A.~P{\'e}rez del Pozo, U.~Vaccaro (2018),
Probabilistic Secret Sharing,
in: \emph{43rd International Symposium on Mathematical Foundations
of Computer Science, MFCS 2018}, 
Leibniz International Proceedings in Informatics,
Schloss Dagstuhl--Leibniz-Zentrum fuer Informatik,
Vol 117, pp 64:1--64:16

\bibitem{chan-yeung}
T.~H.~chan, R.~W.~Yeung (2002),
On a relation between information inequalities and group theory,
\newblock\emph{IEEE Tran. Information Theory} {\bf 57} pp 6364-6378

\bibitem{fuji}
S.~Fujishige (1978),
Polymatroidal dependence structure of a set of random variables.
\newblock \emph{Information and Control} {\bf 39} 55--72.

\bibitem{helgason}
T.~Helgason (1974)
\newblock Aspects of the theory of hypermatroids,
In: Berge C., Ray-Chaudhuri D. (eds) \emph{Hypergraph Seminar}, Lecture Notes in
Mathematics, vol 411. Springer, Berlin, Heidelberg

\bibitem{error-correcting-codes}
W.~C.~Huffman and V.~Pless (2003),
\emph{Fundamentals of error correcting codes},
\newblock
Cambridge University Press, 2003

\bibitem{tarik}
T.~Kaced (2018),
Information Inequalities are Not Closed Under Polymatroid Duality,
\newblock
\emph{IEEE Transactions on Information Theory}, 64,  pp 4379--4381

\bibitem{tarik-2}
T.~Kaced (2011),
Almost-perfect secret sharing,
\newblock
\emph{Information Theory Proceedings (ISIT), 2011 IEEE International
Symposium on}, pp 1603-1607

\bibitem{katz-lindell}
J.~Katz, Y.~Lindell (2007),
\emph{Introduction to modern cryptography},
Chapman \& Hall/CRC

\bibitem{lovasz}
 L.~Lov\'{a}sz (1982),
Submodular functions and convexity.
\newblock \emph{Mathematical Programming -- The State of the Art} (A.~Bachem,
M.~Gr\"otchel
    and B.~Korte, eds.), Springer-Verlag, Berlin, 234--257.

\bibitem{MMRV}
K.~Makarichev, Y.~Makarichev, A.~Romashchenko, N.~Vereshchagin (2002),
\newblock A new class of non-Shannon type inequalities for entropies.
\newblock
\emph{Communications in Information and Systems}, vol 2, pp 147--166

\bibitem{M.back}
F.~Mat\'{u}\v{s} (1994),
Probabilistic conditional independence structures and matroid theory:
background.
\newblock \emph{Int.\ Journal of General Systems} {\bf 22} 185--196.

\bibitem{fmadhe}
F.~Mat\'u\v s (2007), 
Adhesivity of polymatroids,
\newblock {\it Discrete Mathematics}
{\bf 307} pp. 2464--2477

\bibitem{fmtwocon}
F.~Matúš (2007),
Two constructions on limits of entropy functions. 
\newblock \emph{IEEE Transactions on Information Theory} 53, pp 320-330. 

\bibitem{fminf}
F.~Matúš (2007),
Infinitely many information inequalities.
\newblock \emph{Proceedings IEEE ISIT 2007}, Nice, France, pp 41--44.

\bibitem{fmquantoids}
F.~Matúš (2012),
Polymatroids and polyquantoids. 
in: \emph{Proceedings of WUPES'2012} (eds. J. Vejnarová and T. Kroupa) Mariánské
Lázn\v e, Prague, 
Czech Republic, pp 126-136. 

\bibitem{entreg}
F.~Matúš, L.~Csirmaz (2016),
Entropy region and convolution,
\newblock \emph{IEEE Trans. Inf. Theory} {\bf 62} 6007--6018

\bibitem{oxley}
J.G.~Oxley (1992)
\emph{Matroid Theory}, Oxford Science Publications. The Calrendon Press,
Oxford University Press, New York 

\bibitem{padro-notes}
C.~Padr\'o (2012),
Lecture notes in secret sharing,
\newblock \emph{Cryptology ePrint archive}, report 2012/674

\bibitem{yeung-course} 
R.~W.~Yeung (2002),
\newblock\emph{A First Course in Information Theory},
Kluwer Academic/Plenum Publishers, New York.

\bibitem{compartm}
Y.~Yu, M~Wang (2011), 
A probabilistic secret sharing scheme for a
compartmented access structure.
\newblock\emph{In: International Conference on Information
and Communications Security}, pp 136-142. Springer, Berlin, Heidelberg,

\end{document}